\documentclass[a4paper,USenglish]{lipics-v2021}

\usepackage{algorithm}
\usepackage[noend]{algorithmic}

\usepackage{graphicx}
\usepackage{xr}
\usepackage{amsmath,amsthm,amssymb}
\usepackage{placeins}
\usepackage{mathtools}
\usepackage{xcolor}

\usepackage{tikz}
\usetikzlibrary{arrows.meta}

\newcommand{\isim}[1]{\mathrel{\overset{#1}{\sim}}}

\newcommand{\IFRETURN}[2][]{%
  \STATE \textbf{if} #2 \textbf{then }%
  \ifx\relax#1\relax
    \textbf{return}%
  \else
    \textbf{return} #1%
  \fi
}

\newcommand{\PROCEDURE}[1]{%
  \STATE \textbf{Procedure} #1:%
  \begin{ALC@g}%
}

\newcommand{\PROCEDUREREQ}[2]{%
  \STATE \textbf{Procedure} #1: \COMMENT{Require: #2}%
  \begin{ALC@g}%
}

\newcommand{\ENDPROCEDURE}{\end{ALC@g} \item []}

\newcommand{\ignore}[1]{}

\newcommand{\execType}{component-serialized }

\newcommand{\ExecType}{Component-serialized }

\hideLIPIcs
\nolinenumbers

\title{Adaptive Snapshots Require Visible Reads}

\author{Niv Sulimany}{Technion – Israel Institute of Technology, Haifa, Israel}{sulimany.niv@campus.technion.ac.il}{https://orcid.org/0009-0002-2900-9204}{}

\author{Tomer Cory}{Technion – Israel Institute of Technology, Haifa, Israel}{tomer.cory@campus.technion.ac.il}{https://orcid.org/0009-0008-6977-0341}{}

\author{Erez Petrank}{Technion – Israel Institute of Technology, Haifa, Israel}{erez@cs.technion.ac.il}{https://orcid.org/0000-0002-6353-956X}{}

\authorrunning{N. Sulimany, T. Cory, and E. Petrank} 

\Copyright{Niv Sulimany, Tomer Cory, and Erez Petrank} 

\ccsdesc[500]{Computing methodologies~Concurrent computing methodologies}

\keywords{Snapshot, Shared memory, Linearization, Concurrency}

\funding{This research was supported by THE ISRAEL SCIENCE FOUNDATION grant No. 1102/21.}

\EventEditors{Ioannis Chatzigiannakis, Andrea Vitaletti, Keren Censor-Hillel, and William K. Moses Jr.}
\EventNoEds{4}
\EventLongTitle{40th International Symposium on Distributed Computing (DISC 2026)}
\EventShortTitle{DISC 2026}
\EventAcronym{DISC}
\EventYear{2026}
\EventDate{November 9--13, 2026}
\EventLocation{Rome, Italy}
\EventLogo{}
\SeriesVolume{397}
\ArticleNo{1}

\begin{document}

\maketitle

\begin{abstract}

Snapshots are widely used to record the state of a running execution. Snapshots have been extensively studied in the literature, with the goal of improving performance and extending functionality. In this work, we consider {\em adaptive} snapshots over a set of $m$ components. Adaptive snapshots provide a \textsc{Click}() operation that logically creates a new snapshot and an \textsc{Observe}$(i)$ operation that returns the state of component~$i$ at the most recent \textsc{Click}. Several constructions of adaptive snapshots have recently been proposed; interestingly, none of them employs invisible reads, even though invisible reads can improve performance, sometimes significantly. In this paper, we ask whether it is possible to build an adaptive snapshot with invisible reads. We show that, even when restricting the snapshot algorithm to the single-writer, single-scanner setting, under reasonable assumptions satisfied by all existing adaptive snapshot implementations, adaptive snapshots with invisible reads are not linearizable.

\end{abstract}

\section{Introduction} \label{sec:introduction}


An atomic \textsl{snapshot}
is a fundamental concept in concurrent and distributed computing that has been extensively studied and widely used. Atomic snapshots were formalized by Afek et al.~\cite{atomic_snapshot_attiya_def} and Anderson~\cite{composite_registers}. In essence, the problem consists of $n$ concurrent processes maintaining an array of $m$ components while supporting two operations: \textsc{Update}$(i,x)$, which sets the value of component~$i$ to $x$, and \textsc{Scan}(), which returns a vector containing the current values of all components.

Early approaches~\cite{atomic_snapshot_attiya_def, composite_registers, 10.1145/3382734.3406005, lattic_mwmr, snapshot_lattice, atomic_snapshot_nlogn, 10.1145/1281100.1281108, 10.1145/1146381.1146416} focused on implementations using only read-write registers, typically under a single-writer assumption. 
Subsequent work~\cite{f_array} relaxed these restrictions by allowing stronger synchronization primitives, such as Compare-and-Swap and Fetch-and-Increment, thereby extending functionality and improving performance. Jayanti~\cite{jayanti_snapshot}, extending Riany, Shavit, and Touitou~\cite{strong_single_writer}, presented a multi-writer snapshot algorithm with $O(1)$ step complexity for \textsc{Update} operations, $O(m)$ step complexity for \textsc{Scan} operations, and space complexity $O(mn^2)$, later reduced by Ba~\cite{Ba2006Snapshot} to $O(mn)$. Although efficiency improved, the resulting performance overhead was still high for many settings.

\textsl{Adaptive snapshot objects}~\cite{adaptive_partial_snapshot, constant_time_snapshot_cas} remove the requirement that all snapshot locations be known in advance by supporting three types of operations: \textsc{Update}$(i, val)$, which updates the value of component~$i$ and potentially returns its state, \textsc{Click()}, which  creates a new snapshot, and \textsc{Observe}$(i)$, which returns the value of component~$i$ at the last \textsc{Click} operation. An adaptive snapshot object, unlike the standard snapshot object, allows the user to observe the memory items adaptively, based on previous items already observed.

Wei et al.~\cite{constant_time_snapshot_cas} devised an adaptive snapshot algorithm for concurrent data structures built from CAS objects, 
with the progress guarantees of the original data structures preserved, and with step complexity $O(1)$ for the \textsc{Update}, \textsc{Click}, and \textsc{Observe} operations. 
The step complexity for reading a version of a CAS object while traversing the data structure is proportional to the number of successful CAS operations applied to the item after the snapshot was taken and before the read.


Bashari and Woelfel~\cite{adaptive_partial_snapshot} presented a single-writer adaptive snapshot for an array of $m$ locations. Its \textsc{Click} operation has $O(1)$ step complexity, whereas its \textsc{Observe} and \textsc{Update} operations have $O(\log n)$ step complexity. Bashari et al.~\cite{bashari_et_al:LIPIcs.DISC.2024.7} extended this implementation to support multiple scanners and updaters, with constant step complexity for \textsc{Click} and \textsc{Update} operations and $O(\log n)$ step complexity for \textsc{Observe} operations

Jayanti, Jayanti, and Jayanti~\cite{memsnap_podc_24} presented a single-scanner adaptive snapshot object over an array of components. Its \textsc{Update} operations may perform any RMW operation supported by the underlying hardware, and every operation has constant time complexity. This was subsequently~\cite{shared_archive} extended to support an archive of snapshots supporting multiple scanners, with constant-time \textsc{Update} and \textsc{Click} operations, and with a \textsc{Search}$(c,i)$ operation that returns the state of component $c$ in the $i$-th snapshot and has step complexity linear in the number of snapshots taken since the requested snapshot.

Adaptive snapshots seem to be a promising target for further performance improvements and a candidate for use in practice. Moreover, important garbage collection algorithms use variants of adaptive snapshots to build concurrent garbage collectors~\cite{Steele75,DLG,snapshot_collector}. However, all known adaptive snapshot algorithms over $m$ components require each update operation on a component to cooperate with scans and potentially report modifications. Moreover, even reads of individual components must cooperate with scans and may need to report the observed state to the scanner. While the cost of such cooperation is typically $O(1)$, read operations are usually far more frequent than writes, and substantial effort is often devoted to making read operations efficient and wait-free~\cite{TheArt}. 
Reporting read operations to the garbage collector is considered undesirable and detrimental to overall performance~\cite{blackburn2004barriers,atikoglu2012workload}, and is therefore typically avoided~\cite{garbage_collection_handbook,Steele75,DLG,snapshot_collector}. In the exceptional cases where overhead is added to read operations in order to allow concurrent defragmentation of the heap~\cite{pizlo2008study,flood2016shenandoah,10.1145/800055.802042}, a noticeable reduction in performance has been reported.  
Similarly, in adaptive snapshots, such additional overhead can degrade performance and impose significant costs on operations that access individual components, potentially limiting applicability. A natural question is whether the overhead added to read operations can be eliminated or at least made invisible. An {\em invisible read} is one that does not write to shared memory. Invisible reads reduce cache-coherence contention, since read operations need not access cache lines containing shared variables in \textit{exclusive} mode~\cite{hennessy2011computer}.

In this paper, we show that such an optimization is not possible, even in the simplest single-writer, single-scanner setting. In particular, we show that under natural assumptions satisfied by all current implementations of adaptive snapshots~\cite{constant_time_snapshot_cas, shared_archive, memsnap_podc_24, bashari_et_al:LIPIcs.DISC.2024.7, adaptive_partial_snapshot}, linearizable adaptive snapshots cannot employ invisible reads of the individual components. The assumptions we make are formalized in Section~\ref{sec:problem}. Informally, they include \emph{obstruction-freedom} (the weakest standard progress guarantee); \emph{oblivious click}, which requires \textsc{Click} operations not to access the components; \emph{update linearization independence from \textsc{Click}}, which requires that \textsc{Click} operations do not interfere with the linearization of pending updates to components; and \emph{component encapsulation}, which requires that components are accessed only through their methods and that the point at which a component update is linearized consists solely of a write to that component.

The assumptions above are natural in the design of an adaptive snapshot object. Progress guarantees are clearly desirable. Oblivious \textsc{Click}, or at least $k$-bounded \textsc{Click} operations, are important for keeping \textsc{Click} efficient; indeed, in all existing work, the \textsc{Click} operation has complexity $O(1)$. Component encapsulation is a standard and desirable software-engineering property that simplifies both the design and implementation of the object. Finally, the independence of \textsc{Click} from updates may be the least obvious of these assumptions, but it is desirable because it simplifies the \textsc{Click} operation. Moreover, this property, like all of the other assumptions, is satisfied by all existing designs.

Nevertheless, we do not claim that these assumptions are unavoidable or impossible to relax. Rather, the results in this paper can be interpreted as identifying the precise properties that make such implementations impossible. By relaxing some of these properties, it may be possible to design efficient algorithms with invisible reads and eventually deploy them in practical systems. In essence, our results show that, with invisible \textsc{Read} operations, processes executing \textsc{Click} operations and processes updating components must communicate with one another.

We call an algorithm that satisfies the properties discussed above an \emph{adaptive snapshot with oblivious click}. In Section~\ref{sec:first_impossible}, we prove 
that such algorithms cannot employ \emph{invisible} \textsc{Read} operations. The proof constructs two executions that are indistinguishable to a process $p$ executing a \textsc{Click} operation, yet in one execution a concurrent update must be linearized before the \textsc{Click}, whereas in the other it must be linearized after the \textsc{Click}. See Section~\ref{sec:preliminaries} for the definitions of linearizability and indistinguishability. A subsequent \textsc{Observe} operation executed by $p$ must therefore return an incorrect result in one of the executions, contradicting linearizability.

In Subsection~\ref{subsec:non-intrusive}, we relax the restriction that \textsc{Click} operations do not access components. We show that the same impossibility result holds even when a \textsc{Click} operation may access a bounded number of components, under a slightly stronger update-linearization-independence property. We call such algorithms \emph{adaptive snapshot with a $k$-bounded click}. In Section~\ref{sec:second_impossible}, we show that these algorithms also cannot employ invisible \textsc{Read} operations.

\paragraph*{Organization.}
We begin in Section~\ref{sec:related_work} with a survey of previous work most relevant to our results. We proceed in Section~\ref{sec:preliminaries} by describing the computational
model, defining the adaptive snapshot object, and reviewing the notion of
\emph{linearizability}. In Section~\ref{sec:problem}, we formally define the two
sets of assumptions that we make on adaptive snapshot algorithms. We then state the main
theorems of the paper. Section~\ref{sec:first_impossible} presents the proof of
Theorem~\ref{th:main_oblivious}, showing the impossibility of invisible reads for \emph{adaptive snapshots with oblivious click}. Section~\ref{sec:second_impossible}
presents the proof of Theorem~\ref{th:main_non_intrusive}, asserting the impossibility of invisible reads for \emph{adaptive snapshots with a $k$-bounded click}. We conclude in
Section~\ref{sec:conclusion}. Appendix~\ref{sec:additional_proofs} contains some proof details omitted from Section~\ref{sec:first_impossible} and Section~\ref{sec:second_impossible} due to space constraints.

\section{Related work} \label{sec:related_work}

In addition to the works highlighted in the introduction, several lines of research have explored the design, complexity, and limitations of snapshot objects. We briefly survey the results most relevant to our work.

One line of work, aiming at improved performance, studied snapshot constructions over concurrent data structures~\cite{constant_time_snapshot_cas,petrank_snapshot,
7161513,10.1145/3007748.3007771}, rather than over arrays of  components. A snapshot-based technique for concurrent data structures that implement a set or a dictionary was presented by Petrank and Timnat~\cite{petrank_snapshot}, achieving $O(1)$ time complexity for \textsc{Update} and \textsc{Read} operations, whereas taking a snapshot requires invoking \textsc{takeSnapshot}, which returns an iterator over a snapshot of the data structure and takes $O(m)$ time. The benefit of snapshots over data structures is that the overhead on update operations is reduced, because a report is required once per data structure operation rather than for each memory access. This technique was later extended to support range queries by Chatterjee~\cite{10.1145/3007748.3007771}.

In another line of work aimed at improving performance, Attiya, Guerraoui, and Ruppert~\cite{partial_snapshot} formalized the notion of {\em partial snapshots}. Instead of scanning every component during a \textsc{Scan}, it is possible to scan only a subset of the locations that are needed for the computation. 
In scenarios where the number of needed locations is much smaller than the total number of locations $m$, the step complexity required to collect a partial snapshot can be substantially lower than that of a full snapshot. However, it requires knowing, at the time of the scan, the entire set of needed locations. 
Range queries have been studied extensively since then~\cite{10.1145/3503221.3508412, sheffiOPODIS22, 10.1145/3323165.3323197, 10.1145/2692916.2555267, 10.1145/3200691.3178489}.

The study of snapshot objects has produced a long sequence of lower bounds and impossibility results that reveal the inherent costs of maintaining a consistent view of shared memory in asynchronous systems. These results have characterized limitations on complexity measures such as time, space, and synchronization requirements.

Israeli and Shirazi~\cite{ISRAELI199833} showed that, for a single-writer snapshot implementation over $n$ processes using single-writer registers, the worst-case step complexity of an \textsc{Update} operation is $\Omega(n)$, even in executions without concurrent updates.

Fatourou et al.~\cite{10.1145/780542.780582,10.1145/1314690.1314694} employed covering arguments to show that any snapshot implementation over~$m$ components requires at least~$m$ multi-writer registers, and that in a system with~$n$ processes, the worst-case step complexity of a \textsc{Scan} operation is $\Omega(nm)$. Attiya et al.~\cite{ATTIYA20111570} extended these techniques to \emph{partitioned} implementations, namely, implementations that may use an arbitrary number of base objects, provided that each base object is modified only by \textsc{Update} operations on one specific component.

A recent result by Castañeda and Hernández Martínez~\cite{rmw_invisble_read_full_snapshot} presents a wait-free snapshot algorithm using only read and write operations. The algorithm does not provide an implementation of \textsc{Read} operations and is not adaptive in its original form. However, it can be made adaptive by implementing \textsc{Click} using the scan operation, having \textsc{Observe} read from the snapshot produced by the most recent scan, and implementing \textsc{Read} by simply reading and returning the current value of the corresponding component. This transformation would not yield an efficient adaptive snapshot algorithm, but it would produce one with invisible \textsc{Read} operations. Thus, it illustrates how avoiding one of our assumptions, in this case, oblivious click, can make an adaptive snapshot with invisible reads possible.

Finally, Attiya et al.~\cite{laws_of_order} proved that certain synchronization patterns are unavoidable in implementations of objects such as mutual exclusion objects, sets, queues, and stacks. In a similar spirit, our work shows that, under our assumptions, \textsc{Read} operations must modify shared memory.

 \section{Preliminaries} \label{sec:preliminaries}


\ignore{

\begin{definition}

\textsl{Strong linearizability}~\cite{attiya2019puttingstronglinearizabilitycontext, golab2011linearizableimplementationssufficerandomized, ovens2019stronglylinearizableimplementationssnapshots} strengthens
linearizability by requiring that the assignment of linearization points be
\emph{prefix-preserving}: for every execution $E$ and every prefix $E'$ of $E$,
the linearization points assigned to operations in $E'$ are identical to those
assigned to the same operations in $E$. Consequently, the linearization point
of an operation cannot depend on events that occur after its response.
\end{definition}

}

We consider the standard asynchronous shared memory model~\cite{distributed_computing,impossibility_results, 10.5555/1374804}. The system consists of $n$ processes, labeled $p_1,\ldots,p_n$. Each process has a local state that is accessible only to that process. Processes communicate by performing read and write operations on shared memory, as well as through \emph{read-modify-write} (RMW) instructions on shared memory. Available RMW instructions include widely supported operations such as Test-and-Set, Fetch-and-Add, and Compare-and-Swap (CAS) on a single memory word, as well as wide CAS (double-width CAS) on two adjacent memory words. However, we exclude operations such as $m$-CAS that can operate on multiple non-adjacent words and are not widely available in hardware. Such operations need to be implemented in software using available primitives~\cite{guerraoui_et_al:LIPIcs.DISC.2020.4, 10.5555/645959.676137, e105-d_5_946, 10.5555/645954.675655, 10.1145/197917.198079,TimnatHP15}.

A \emph{configuration} describes the state of the system at a given point in
time. It consists of the local states of all processes together with the values
of all shared-memory variables. An \emph{initial configuration} is a
configuration in which the local state of every process and the value of every
shared-memory variable equal their respective initial states.

We follow the definition given in~\cite{impossibility_results}: two
configurations $C$ and $C'$ are \emph{indistinguishable} to a process $p$ if
the local state of $p$ is identical in both configurations. We denote this by
$C \isim{p} C'$. If $P$ is a set of processes and
$C \isim{p} C'$ holds for every $p \in P$, we write $C \isim{P} C'$.


At any configuration, a set of execution steps, also referred to as events, may occur, each executed by a single process. For example, a process may read from or write to a shared variable, or invoke an RMW operation on a shared object. In each step, a process uses its local state to select a shared-memory object and an operation, applies the operation to the object, and updates its local state according to its transition system~\cite{distributed_computing}.

The set of events that can occur at a configuration depends on both the state of the shared memory and the local states of the processes. Formally, for a configuration $C$, the set of events that can occur at $C$ consists of all steps that may be taken by some process when the system is in configuration $C$.

An \emph{execution} is a sequence of alternating configurations and events, starting with the initial configuration, such that each event can occur in the configuration that precedes it, and its execution results in the configuration that follows it. For a set of processes $P$, an execution or sequence of events is said to be $P$-only if every event in it is performed by a process in $P$. The history associated with an execution is its sequence of events in the order in which they occurred in the execution.

A sequence of events $\sigma$ can occur starting at a configuration $C$ if there exists an execution that starts from $C$ whose sequence of events is exactly $\sigma$. If $\sigma$ is finite, we denote by $C\sigma$ the final configuration reached by executing $\sigma$ from~$C$.

In this model, the following lemma (Lemma 2.1 in~\cite{impossibility_results}) holds. It states that if the local states of a set of processes are initially identical and these processes read the same values during an execution, then their final local states are also identical.

\begin{lemma}\label{lemma:indistinguishable_configurations}
    Let $\sigma$ be a sequence of events performed by a set of processes $P$ that can occur starting at a configuration~$C$. If $C \isim{P} C'$ and all shared-memory locations accessed during $\sigma$ have the same values in $C$ and $C'$, then $\sigma$ can also occur starting at $C'$. Moreover, if $\sigma$ is finite, then $C\sigma \isim{P} C'\sigma$.
\end{lemma}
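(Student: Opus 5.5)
We prove the lemma by induction on the length of $\sigma$, tracking both the executability of each event from the primed configuration and the indistinguishability of local states for processes in $P$. Write $\sigma = e_1 e_2 \cdots$, let $C_0 = C$ and $C'_0 = C'$, and let $C_j = C_{j-1} e_j$, $C'_j = C'_{j-1} e_j$ whenever these are defined. We prove by induction on $j$ the following invariant: $e_j$ can occur at $C'_{j-1}$, $C_j \isim{P} C'_j$, and every shared location accessed in $e_1,\dots,e_j$ or later in $\sigma$ holds the same value in $C_j$ and $C'_j$. For an infinite $\sigma$, this shows that every finite prefix can occur from $C'$, and hence, by the definition of an execution as a sequence built step by step, so can $\sigma$ itself. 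For finite $\sigma$, the last instance of the invariant gives $C\sigma \isim{P} C'\sigma$.

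The base case $j=0$ is exactly the hypothesis. For the inductive step, let $e_j$ be performed by $p \in P$. By the model, the step $p$ takes is determined by its local state: the choice of object and operation depends only on that state. Since $C_{j-1} \isim{p} C'_{j-1}$, process $p$ selects the same location $x$ and the same operation in both configurations. Location $x$ is accessed during $\sigma$, so it holds the same value in $C_{j-1}$ and $C'_{j-1}$. Applying the same read, write, or RMW to the same value returns the same response and leaves the same new value in $x$. Since $p$'s state transition depends only on its prior state and the response, $p$ ends in the same local state after $e_j$ in both runs. Hence $e_j$ can occur at $C'_{j-1}$, and it is the same event there.

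It remains to preserve the invariant for everything else. The local states of the other processes in $P$ are untouched by $e_j$, since only $p$ moves. The only shared location that changes is $x$, and it changes identically in both runs. Every other location accessed in $\sigma$ keeps its value. Thus locations accessed in $\sigma$ continue to agree between $C_j$ and $C'_j$. This is why the hypothesis covers all locations accessed during $\sigma$, not just those read before any write.

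The main subtlety is this last point. A location written by some earlier event and read later must agree at the time of the later read. It does, because both runs wrote the same value to it; and any location accessed in $\sigma$ but not yet touched still agrees by the hypothesis.

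A second point is that "indistinguishable" refers only to local states, so configurations may differ on locations not accessed in $\sigma$. Such locations never influence any event of $\sigma$, because each event's behavior depends only on the acting process's local state and the one location it accesses.

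The induction is otherwise routine; this is Lemma 2.1 of~\cite{impossibility_results}, and the argument follows theirs.
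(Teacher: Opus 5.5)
Your proof is correct and is essentially the intended argument. The paper gives no proof of its own and simply invokes Lemma~2.1 of~\cite{impossibility_results}, which is established by exactly this induction on the length of $\sigma$, maintaining that the local states of $P$ agree and that all locations accessed in $\sigma$ agree. (Read your phrase ``the one location it accesses'' as ``the locations it accesses,'' so that the double-width CAS the model permits is covered; the argument is otherwise unchanged.)
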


\begin{definition} [Obstruction-freedom]
An algorithm is \emph{obstruction-free} if any process~$p$ completes the operation it is performing whenever it executes in isolation for sufficiently many steps.
    
\end{definition}

\begin{definition}[Linearizability] \label{def:linearizability}

Linearizability~\cite{maurice_wing_linearize} is a standard correctness
criterion for concurrent objects. An execution $E$ consists of a set of
operations, each represented by an invocation event and a matching response
event. An execution is \emph{sequential} if each invocation is immediately
followed by its matching response (except possibly for a final unmatched
invocation).

An execution $E$ is \emph{linearizable} if there exists a mapping that assigns
to each operation a single event in the execution, called its \emph{linearization point},
such that:
\begin{itemize}
    \item \emph{Real-time order:}
    The linearization point of each operation occurs at some instant between
    its invocation and its response in $E$.
    \item \emph{Sequential correctness:}
    Ordering the operations according to the order of their linearization points in $E$ yields a legal sequential execution that is consistent with the
    object's sequential specification and in which each operation returns
    the same value as in $E$.
\end{itemize}

An algorithm is linearizable if every execution of the algorithm is linearizable.

\end{definition}

\begin{definition} [Adaptive snapshot object] \label{def:adaptive_snapshot}

An adaptive snapshot object consists of $m$ linearizable components. Each component must be readable, that is, it must support a linearizable operation that returns its state without modifying it.
The abstract state of the snapshot object is a pair of arrays of length~$m$,
denoted $(state, snap)$.
Intuitively, $state[j]$ stores the current state of the $j$-th component,
while $snap[j]$ stores the state of the $j$-th component in the most recent
snapshot.
Initially, $state = snap = \vec{s}$ for a given initial array of states
$\vec{s}$.

The adaptive snapshot object supports four operations: \textsc{Invoke}, \textsc{Click}, \textsc{Observe}, and \textsc{Read}. We specify the sequential behavior of each operation by describing the resulting state $(state', snap')$ and the return value~$r$ when the operation is invoked in state $(state, snap)$.

\begin{enumerate}
    \item \textsc{Invoke}$(i, op, arg)$: Applies the operation $op(arg)$ to the
    $i$-th component. $op$ must be one of the operations supported by the components. Specifically,
    (i) the snapshot remains unchanged, $snap' = snap$,
    (ii) the state of component~$i$ is updated to $state'[i] = state[i].op(arg)$,
    (iii) all other components remain unchanged,
    $\forall j \neq i:\; state'[j] = state[j]$, and
    (iv) the return value $r$ is the result of executing $state[i].op(arg)$.

    \item \textsc{Click}$()$: Takes a snapshot of the current state of the components. That is, $snap' = state' = state$, and the operation returns $r = \textsl{ack}$.

    \item \textsc{Observe}$(i)$: Returns the state of the $i$-th component
    in the most recent snapshot. Formally, $snap' = snap$, $state' = state$, and
    $r = snap[i]$.

    \item \textsc{Read}$(i)$: Returns the current state of the $i$-th component without modifying it. Formally, $state' = state$, $snap' = snap$, and $r = state[i]$.
\end{enumerate}

\end{definition}

\ignore{

\begin{definition} [Obstruction-free adaptive snapshot algorithm] \label{def:obstruction_free_snapshot}

Let $\mathcal{A}$ be an algorithm that implements the adaptive snapshot object (see definition~\ref{def:adaptive_snapshot}). If all the snapshot operations are obstruction-free, and all the operations supported by the components are obstruction-free, we call $\mathcal{A}$ an obstruction-free adaptive snapshot algorithm.
    
\end{definition}

}

\begin{definition}[Accessing a component] \label{def:component_access}
For an implementation of an adaptive snapshot object, we partition the shared-memory locations into locations associated with exactly one component and locations that are not associated with any particular component. A location associated with component~$i$ is used by the implementation to represent the current state of that component. We say that a process
\emph{accesses component}~$i$ when it accesses a
shared-memory location associated with component~$i$.
\end{definition}

We now formally define the notions of an invisible \textsc{Read} operation and an updating operation.

\begin{definition}[Invisible \textsc{Read}] \label{def:invisible_read}
A \textsc{Read} operation is \emph{invisible} if it performs no write or
read-modify-write operation on shared memory. 
\end{definition}
Invisible \textsc{Read} operations leave the
shared-memory state unchanged and and therefore cannot be detected by other processes through shared memory~\cite{10.1007/11864219_14, 10.1145/1345206.1345233, 10.1145/872035.872048, 10.1007/11561927_26, 10.1145/1073814.1073861}.

In this paper, we focus on a restricted class of executions of a snapshot object in which there is at most one concurrent \textsc{Invoke} operation per component. Such executions are of particular interest because the executions constructed in Sections~\ref{sec:first_impossible} and~\ref{sec:second_impossible} to establish the impossibility satisfy this property.

\begin{definition}[\ExecType execution] \label{def:exec_type}
Let $\mathcal{A}$ be an adaptive snapshot algorithm.
An execution $E$ of $\mathcal{A}$ is a` \execType execution
if, for every component~$i$, no two \textsc{Invoke}
operations on component~$i$ are concurrent in~$E$.
\end{definition}

\begin{definition}[Updating operation]\label{def:updating_operation}

Let $E$ be a linearizable execution of an adaptive snapshot algorithm $\mathcal{A}$, and let $OP=\textsc{Invoke}(i,op,arg)$ be an operation in $E$. We say that $OP$ is an \emph{updating operation} with respect to a linearization $L$ of $E$ if, when the operations in $E$ are executed sequentially according to the linearization order induced by $L$, $OP$ changes the state of component~$i$. If $OP$ is updating with respect to every linearization of $E$, then we simply say that $OP$ is an updating operation in $E$.

\end{definition}

We first observe that if $E$ is a linearizable \execType execution, then an operation is either updating with respect to every linearization of $E$ or with respect to no linearization, and the resulting state is the same in all linearizations. Consequently, such operations may simply be referred to as updating operations, without specifying a particular linearization. The formal statement appears in Observation~\ref{obs:updating_op_globabl} below.

\begin{observation}\label{obs:updating_op_globabl}
Let $E$ be a linearizable \execType\ execution of an adaptive
snapshot algorithm $\mathcal{A}$, and let
$OP=\textsc{Invoke}(i,op,arg)$ be an operation in~$E$.
Suppose that $OP$ is an updating operation with respect to some
linearization of~$E$, and that, when the operations are executed
sequentially according to this linearization, $OP$ changes the
state of component~$i$ from \texttt{state$_1$} to
\texttt{state$_2$}. Then, in every linearization of~$E$,
$OP$ is an updating operation, and when the operations are executed
sequentially according to the linearization, $OP$ changes the state
of component~$i$ from \texttt{state$_1$} to
\texttt{state$_2$}.
\end{observation}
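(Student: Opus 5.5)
The plan is to show that, in a \execType execution, the sequence of \textsc{Invoke} operations on component~$i$ is ordered the same way in every linearization, and that only these operations can change $state[i]$. The updating status of $OP$ and its before/after states then become independent of the linearization chosen.

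First, I will note from Definition~\ref{def:adaptive_snapshot} that, in the sequential specification, $state[i]$ changes only through \textsc{Invoke}$(i,\cdot,\cdot)$ operations. \textsc{Click} changes only $snap$ (it sets $snap'=state$ and leaves $state$ unchanged). \textsc{Observe} and \textsc{Read} change nothing, and \textsc{Invoke} on a component $j\neq i$ leaves $state[i]$ unchanged. Hence, when the operations are run sequentially in any linearization order $L$, the value of $state[i]$ just before a given \textsc{Invoke} on component~$i$ is obtained from the initial value $\vec{s}[i]$ by applying, in order, exactly the \textsc{Invoke}$(i,\cdot,\cdot)$ operations that precede it in $L$.

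Second, I will use the \execType hypothesis: no two \textsc{Invoke} operations on component~$i$ are concurrent in $E$. Any two of them, $OP_1$ and $OP_2$, are therefore ordered in real time, say the response of $OP_1$ precedes the invocation of $OP_2$. By the real-time condition of Definition~\ref{def:linearizability}, each linearization point lies between the corresponding invocation and response. So every linearization places $OP_1$ before $OP_2$, and the relative order of the \textsc{Invoke}$(i,\cdot,\cdot)$ operations is the same in every linearization of $E$.

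Some care is needed with a pending final \textsc{Invoke} that has no response. If it is assigned a linearization point in one linearization, it is still ordered after all the other \textsc{Invoke}$(i,\cdot,\cdot)$ operations, since they completed before it was invoked. If a linearization is allowed to omit such an operation, then, since $OP$ appears in the linearization we start from, the subsequence preceding $OP$ is unaffected. This edge case is the only real obstacle, and I will dispose of it with the remark above.

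Combining the two steps, an induction on the position of $OP$ within this common sequence shows that the state of component~$i$ immediately before $OP$ is the same value, \texttt{state$_1$}, in every linearization. Applying $op(arg)$ deterministically to \texttt{state$_1$} according to the component's sequential specification then yields the same \texttt{state$_2$} in every linearization. Hence $OP$ is updating with respect to every linearization, with the same transition from \texttt{state$_1$} to \texttt{state$_2$}. If the component operations were nondeterministic, I would instead fix the outcome by $OP$'s return value in $E$, but the specification as given is deterministic.
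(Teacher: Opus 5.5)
Your proof is correct and takes essentially the same route as the paper's: the component-serialized hypothesis plus the real-time condition fix the relative order of all \textsc{Invoke}$(i,\cdot,\cdot)$ operations in every linearization, and since only these operations change $state[i]$, the pre-state \texttt{state$_1$} of $OP$, and hence its transition to \texttt{state$_2$}, is the same in every linearization. Your remarks on pending operations and determinism spell out more than the paper does. Under the paper's Definition~\ref{def:linearizability}, every operation, pending or not, receives a linearization point, so the omission case never arises; this includes the case where $OP$ itself would be dropped, which your remark does not actually address.
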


Observation~\ref{obs:updating_op_globabl} is proved formally in Appendix~\ref{proof:updating_op_globabl}.

Next, we define the notion of an \emph{effective linearization
step}. This definition applies only to \execType\ executions,
namely, executions in which there are no concurrent
\textsc{Invoke} operations on the same component. Informally, the
effective linearization step of a completed updating operation is the step at
which the modification to the component takes effect and becomes
visible to subsequent \textsc{Read} operations. The effective
linearization step is not necessarily a step executed by the process
that invoked the updating operation. In executions with concurrent
\textsc{Invoke} operations on the same component (i.e.,
executions that are not component-serialized), effective linearization steps
are undefined.

\begin{definition}[Effective linearization step]
\label{def:effective_linearization}

Let $\mathcal{A}$ be a linearizable obstruction-free adaptive snapshot algorithm. Let $E$ be a \execType execution of $\mathcal{A}$, and let $\textsc{Invoke}(i,op,arg)$ be a completed updating operation (see Definition~\ref{def:updating_operation}) in $E$ that changes the state of the $i$-th component from \texttt{state$_1$} to \texttt{state$_2$}. Then the effective linearization step of the operation is the first step during its execution after which replacing the remaining suffix of $E$ by an isolated \textsc{Read}$(i)$ operation causes that operation to return \texttt{state$_2$}.
\end{definition}

In Observation~\ref{obs:elp_exist} below, we prove that every completed updating operation has an effective linearization step.

\begin{observation} \label{obs:elp_exist}

Let $\mathcal{A}$ be a linearizable adaptive snapshot algorithm. Assume that \textsc{Read} operations in $\mathcal{A}$ are obstruction-free. Let $E$ be a \execType execution of $\mathcal{A}$. Then, every completed updating operation in $E$ has a unique effective linearization step. 
\end{observation}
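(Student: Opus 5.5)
Fix a completed updating operation $OP=\textsc{Invoke}(i,op,arg)$ in $E$ that changes component~$i$ from \texttt{state$_1$} to \texttt{state$_2$}. Let $s_0,s_1,\dots,s_t$ be the steps of $E$ from the invocation of $OP$ to its response, where $s_0$ denotes the (virtual) point right after the invocation. For each $j$, let $E_j$ be the prefix of $E$ ending at $s_j$, and let $F_j$ be $E_j$ extended by a \textsc{Read}$(i)$ run in isolation by a process that is idle at the end of $E_j$; adding a process that performed no step so far is harmless. Since \textsc{Read} is obstruction-free, this \textsc{Read} completes in $F_j$, and it returns some value $r_j$. Uniqueness is immediate because the effective linearization step is defined as the \emph{first} step with a given property. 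So the only work is existence: some $j$ must satisfy $r_j=\texttt{state}_2$.

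For existence I would examine the last index $j=t$. The execution $F_t$ is a prefix of $E$ up to the response of $OP$, followed by an isolated \textsc{Read}$(i)$. It is an execution of $\mathcal{A}$, so it is linearizable. It is also \execType, because the added \textsc{Read} is not an \textsc{Invoke} and the prefix is component-serialized. In any linearization $L$ of $F_t$, $OP$ precedes the \textsc{Read}, since $OP$ responded before the \textsc{Read} was invoked.

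The step I expect to be the main obstacle is showing that no other \textsc{Invoke} on component~$i$ lies between $OP$ and the \textsc{Read} in $L$, and that $OP$ still takes \texttt{state$_1$} to \texttt{state$_2$} in $F_t$. For the first part, every \textsc{Invoke}$(i,\cdot)$ in $E_t$ other than $OP$ is not concurrent with $OP$, so it either finished before $OP$ was invoked or is invoked after $OP$'s response. A pending \textsc{Invoke}$(i,\cdot)$ invoked after $OP$'s response could still be linearized between $OP$ and the \textsc{Read}. To avoid this, I would instead take the prefix to end exactly at $OP$'s response step and argue it contains no such pending invocation. If some remain from concurrency with other processes, I would truncate further, or note that a pending invocation may be dropped from the linearization. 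For the second part, I would use Observation~\ref{obs:updating_op_globabl} on $E$ together with the fact that the operations on component~$i$ preceding $OP$ are all completed before $OP$'s invocation and appear in the same real-time order in $E$ and $F_t$, so their sequential effect yields \texttt{state$_1$} in both.

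Given these two facts, sequential correctness of $L$ forces the \textsc{Read} to return \texttt{state$_2$}. Hence $r_t=\texttt{state}_2$, the set of qualifying steps is nonempty, and its minimum is the unique effective linearization step.
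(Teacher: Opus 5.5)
Your proposal is correct and follows essentially the same route as the paper: an isolated \textsc{Read}$(i)$ run after $OP$ completes must return \texttt{state$_2$} by linearizability together with component-serialization, so the set of qualifying steps is nonempty and its minimum is the unique effective linearization step. Your worry about pending invocations resolves directly, because the prefix ends at $OP$'s response, so component-serialization forces every other \textsc{Invoke}$(i,\cdot)$ in it to have completed before $OP$'s invocation; the fallback of ``dropping'' a pending invocation is therefore unnecessary, and it would not be a valid argument anyway, since linearizability only guarantees that \emph{some} linearization exists.
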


Observation~\ref{obs:elp_exist} is proved formally in Appendix~\ref{proof:elp_exist}.

\section{Problem statement} \label{sec:problem}



We aim to show that implementing invisible \textsc{Read} operations in adaptive snapshot algorithms is impossible under a natural set of assumptions. In this section, we identify two such sets of properties that characterize a broad class of adaptive snapshot algorithms~\cite{memsnap_podc_24,bashari_et_al:LIPIcs.DISC.2024.7,adaptive_partial_snapshot,shared_archive,constant_time_snapshot_cas}. The impossibility result we establish holds even for the simpler single-scanner setting, and therefore applies to multi-scanner snapshot implementations as well.

In this section (and throughout the paper), we denote by $m$ the number of components managed by the snapshot object, and by $n$ the number of processes in the system.


\subsection{Adaptive snapshot with oblivious click} \label{subsec:click_oblivious}

We begin with a natural class of algorithms in which \textsc{Click} operations are oblivious to the states of the underlying components, and the responsibility for recording and reporting component values is delegated to \textsc{Observe} and \textsc{Invoke} operations. We refer to such
algorithms as \emph{adaptive snapshots with oblivious click}, which we  formally define below.

We additionally assume that if an effective linearization step (see Definition~\ref{def:effective_linearization}) of an \textsc{Invoke} operation exists, then it is independent of concurrent \textsc{Click} executions. In particular, suppose the next execution step of process $p$ is the effective linearization step of an updating \textsc{Invoke} operation on some component. If a \textsc{Click} operation is executed just before $p$ performs that step, then the next step executed by $p$ still constitutes the effective linearization step of that \textsc{Invoke} operation.

\begin{definition}[Update linearization independence from \textsc{Click}]
\label{def:invoke_independence}
Let $\mathcal{A}$ be an adaptive snapshot algorithm and $E$ be a \execType execution of $\mathcal{A}$ with history
$e_1, e_2, \ldots$ (finite or infinite). Let $p$ be a process that executes an effective linearization step of an updating operation \textsc{Invoke}$(i,op,arg)$. Let $e_t$ denote the step that is the effective linearization step of this operation in $E$. Consider an execution $E'$ whose first $t-1$ events are identical to those of $E$, followed by an arbitrary sequence of \textsc{Click} steps executed by processes other than $p$ (either continuing
previously invoked operations or newly invoked ones). After these steps, process $p$ performs step $e_t$, followed by any valid suffix of execution
steps. In the resulting execution $E'$, the effective linearization step of
\textsc{Invoke}$(i,op,arg)$ is the event $e_t$, performed by $p$, and the
result of the operation (both the state modification and the returned value)
is identical to its result in execution $E$. 

We say that an adaptive snapshot algorithm $\mathcal{A}$ satisfies
\emph{Update linearization independence from \textsc{Click}} if the property defined above holds for every \execType execution $E$ of $\mathcal{A}$, every process $p$, and every
\textsc{Invoke}$(i,op,arg)$ operation in $E$.
\end{definition}

Following this definition, and using the fact that invisible \textsc{Read} operations do not modify shared memory, we immediately obtain that if concurrent \textsc{Click} operations do not affect the linearization of updating operations, then neither do invisible \textsc{Read} operations. This is formalized in the following lemma.

\begin{lemma}[Update linearization independence from \textsc{Click} and \textsc{Read}]
\label{lemma:invoke_independence_to_read}
Let $\mathcal{A}$ be an adaptive snapshot algorithm that satisfies update linearization independence from \textsc{Click}, and implements invisible \textsc{Read} operations. Let $E$ be a \execType execution of $\mathcal{A}$ with history
$e_1, e_2, \ldots$ (finite or infinite). Let $p$ be a process that executes an effective linearization step of an updating operation \textsc{Invoke}$(i,op,arg)$.
 Let $e_t$ denote the step that is the effective linearization step of this operation in $E$. Consider an execution $E'$ whose first $t-1$ events are identical to the first
$t-1$ events of $E$, followed by an arbitrary sequence of steps, not executed by $p$, consisting solely of \textsc{Click} and \textsc{Read} operations (either continuing
previously invoked operations or newly invoked ones), such that no process executes steps from both \textsc{Click} and \textsc{Read} operations in that sequence. After these steps, process $p$ performs step $e_t$, followed by any valid suffix of execution
steps. Then, in execution $E'$, the effective linearization step of
\textsc{Invoke}$(i,op,arg)$ is the event $e_t$, performed by $p$, and the
result of the operation (both the state modification and the returned value)
is identical to its result in execution $E$.
\end{lemma}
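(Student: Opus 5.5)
The plan is to reduce this lemma to Definition~\ref{def:invoke_independence} by deleting the \textsc{Read} steps from the inserted sequence. Let $\tau$ be the inserted sequence of \textsc{Click} and \textsc{Read} steps executed after $e_1,\dots,e_{t-1}$. Let $\tau_C$ be the subsequence of $\tau$ consisting of the steps of processes executing \textsc{Click} operations, and $\tau_R$ the subsequence of steps of processes executing \textsc{Read} operations. By hypothesis, no process contributes to both, so $\tau_C$ and $\tau_R$ are performed by disjoint sets of processes $P_C$ and $P_R$. Neither set contains $p$.

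The first step is to show that $\tau_C$ alone can occur after $C_{t-1}$, the configuration reached after $e_1,\dots,e_{t-1}$, and that the shared memory after $\tau_C$ equals the shared memory after $\tau$. Since every \textsc{Read} is invisible (Definition~\ref{def:invisible_read}), no step of $\tau_R$ writes or applies an RMW to shared memory. Hence removing the steps of $\tau_R$ does not change the value of any shared location seen by any step of $\tau_C$. Apply Lemma~\ref{lemma:indistinguishable_configurations} inductively, one step of $\tau_C$ at a time, to processes in $P_C$. The local state of each process in $P_C$ is untouched by steps of other processes, and the shared memory just before each $\tau_C$ step is the same in both runs. So $\tau_C$ can occur from $C_{t-1}$, and $C_{t-1}\tau_C$ and $C_{t-1}\tau$ agree on all shared memory and on the local states of every process outside $P_R$. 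In particular they agree on the local state of $p$, which is the same as in $C_{t-1}$. The two configurations differ only in the local states of processes in $P_R$.

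Next, consider the execution $E''$ consisting of $e_1,\dots,e_{t-1}$, then $\tau_C$, then $p$'s step $e_t$, then the suffix of $E'$. We must check that this suffix is valid from $C_{t-1}\tau_C e_t$. For processes outside $P_R$ this follows from Lemma~\ref{lemma:indistinguishable_configurations}, since shared memory and their local states coincide. Processes in $P_R$ may take further steps in the suffix from different local states, and this is the delicate point. I would handle it by also replaying $\tau_R$ at the corresponding positions: this keeps $E''$ and $E'$ identical step-for-step while the $\tau_R$ steps do not affect shared memory. Alternatively, I would argue that the property is determined by the events of $p$ and the shared memory state, which are identical in both executions.

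Finally, apply Definition~\ref{def:invoke_independence} to $E$ with the inserted \textsc{Click}-only sequence $\tau_C$. It gives that in $E''$ the effective linearization step of \textsc{Invoke}$(i,op,arg)$ is $e_t$ by $p$, and that its state modification and return value match those in $E$. To transfer this to $E'$, use Definition~\ref{def:effective_linearization}. The effective linearization step is defined by whether replacing the remaining suffix with an isolated \textsc{Read}$(i)$ returns \texttt{state$_2$}. For every prefix cut at or after position $t$, the configurations in $E'$ and $E''$ have identical shared memory. An isolated \textsc{Read}$(i)$ by a fresh or idle process therefore behaves identically by Lemma~\ref{lemma:indistinguishable_configurations}. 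For prefixes ending before $e_t$, the same shared memory equality shows the \textsc{Read} does not yet return \texttt{state$_2$} in $E'$, because it does not in $E''$. The return value of $p$'s operation is determined by $p$'s local evolution, which is identical in both runs.

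The main obstacle is the bookkeeping of the interleaving. The inserted \textsc{Read} steps may be pending operations of processes that later act in the suffix, so $E'$ and $E''$ are not literally related by deleting events. The cleanest fix is to work with the variant of $E''$ that keeps the \textsc{Read} steps. In that variant, the only role of $\tau_R$ is local-state change in $P_R$, which is invisible to all other processes and to isolated reads.
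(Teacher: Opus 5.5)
Your proposal is correct and follows the paper's proof essentially step for step. The paper likewise deletes the \textsc{Read} steps to get a \textsc{Click}-only sequence, shows by induction with Lemma~\ref{lemma:indistinguishable_configurations} that it can occur with identical shared memory, applies Definition~\ref{def:invoke_independence} to it, and uses an isolated \textsc{Read}$(i)$ by an idle process both to rule out an earlier effective linearization step and to confirm that $e_t$ qualifies (it only phrases this as a contradiction). The suffix bookkeeping you call the main obstacle is unnecessary for the effective-linearization-step claim, because that step depends only on the configurations up to and including $e_t$, so you can give the \textsc{Click}-only execution any completing suffix. Your suggested fix of keeping the \textsc{Read} steps would in fact be circular, since Definition~\ref{def:invoke_independence} no longer applies, and neither you nor the paper rigorously justifies the return-value clause for an arbitrary suffix.
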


\begin{proof}
Intuitively, invisible reads are invisible to the updating process and do not affect their execution. The formal proof is postponed, due to lack of space, and appears in Appendix~\ref{proof:invoke_independence_to_read}.
\end{proof}

We are now ready to introduce the first class of algorithms.

\begin{definition}[Adaptive snapshot with oblivious click]\label{def:properties_oblivious_click}
A linearizable adaptive snapshot algorithm $\mathcal{A}$, which maintains a
collection $\mathcal{O}$ of $m$ linearizable components, is called an
\textsl{adaptive snapshot with oblivious click} algorithm if it implements the
adaptive snapshot object and satisfies the following properties:
\begin{enumerate}

\item \label{prop:oblivious_limit_click}
    \emph{Click-obliviousness:}
    A \textsc{Click} operation does not access any component (see Definition~\ref{def:component_access}).

\item \label{prop:oblivious_obstruction_free}
    \emph{Progress guarantee:}
    $\mathcal{A}$ is an obstruction-free adaptive snapshot algorithm.

\item \label{prop:oblvious_click_linearize}
    \emph{Update linearization independence from click:}
    $\mathcal{A}$ satisfies \emph{Update linearization independence from \textsc{Click}} as in Definition~\ref{def:invoke_independence}.
    
    \item \label{prop:oblivious_act_on_memory}
    \emph{Component encapsulation:}
    Component~$\mathcal{O}[i]$ is modified only through \textsc{Invoke}$(i,op,arg)$ operations. Moreover, in \execType executions, the effective linearization step of
    an updating operation, accesses and modifies only the portion of shared memory associated with component~$\mathcal{O}[i]$.
\end{enumerate}
\end{definition}


The requirements in Definition~\ref{def:properties_oblivious_click} are natural, and all non-blocking adaptive snapshot algorithms that we are aware of~\cite{bashari_et_al:LIPIcs.DISC.2024.7,shared_archive,memsnap_podc_24,constant_time_snapshot_cas,adaptive_partial_snapshot} satisfy these assumptions. As an illustration, in Appendix~\ref{sec:examples_satisfy_props} we explicitly show that the algorithms presented in~\cite{memsnap_podc_24} and~\cite{bashari_et_al:LIPIcs.DISC.2024.7} satisfy the required properties. Of-course, it is possible to construct contrived implementations that do not satisfy Definition~\ref{def:properties_oblivious_click}.

Property~\ref{prop:oblivious_limit_click} requires that \textsc{Click} operations do not directly access the components in order to record their state. As a result, components' states can only be recorded lazily, for example, during an
\textsc{Observe}, \textsc{Invoke}, or \textsc{Read} operations.

Property~\ref{prop:oblivious_obstruction_free} requires the algorithm to satisfy
a minimal progress guarantee, specifically, obstruction-freedom. This excludes implementations
that block all operations in order to initiate a snapshot.


Property~\ref{prop:oblvious_click_linearize} states that \textsc{Click} operations do not interfere with a pending effective linearization step of an updating operation. Note that a \textsc{Click} operation may still affect the actual linearization point of an \textsc{Invoke} operation. However, when an effective linearization step is about to be executed by process $p$, an execution step of a \textsc{Click} operation by another process does not prevent the next step of $p$ from being an effective linearization step, as observed by a solo \textsc{Read} operation.

Finally, component encapsulation (Property~\ref{prop:oblivious_act_on_memory}) prohibits simultaneously modifying multiple objects, as well as simultaneously modifying a component and shared-memory locations that are not associated with that component. On the one hand, this excludes operations such as $m$-CAS that modify a component conditionally on external metadata matching an expected value. On the other hand, it precludes modifying a component while simultaneously recording information in a shared-memory location that is not associated with that component. In particular, the property prohibits relying on external data when applying the modification and prevents information about the modification from being leaked to external memory locations.

We are now ready to state the first  theorem in this paper. 
\begin{theorem}\label{th:main_oblivious}
A linearizable adaptive snapshot with oblivious click algorithm cannot employ invisible reads.  
\end{theorem}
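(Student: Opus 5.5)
We argue by contradiction. Assume $\mathcal{A}$ is a linearizable adaptive snapshot with oblivious click whose \textsc{Read} operations are invisible. We work with two processes and a single component $i$. Process $p$ issues \textsc{Click} and \textsc{Observe}; process $q$ issues one updating \textsc{Invoke}$(i,op,arg)$ that moves component $i$ from \texttt{state$_1$} to \texttt{state$_2$}. A third process $r$ may issue \textsc{Read}$(i)$ if needed. Every execution we build has one \textsc{Invoke} per component, so it is component-serialized. Observations~\ref{obs:updating_op_globabl} and~\ref{obs:elp_exist} then guarantee that the \textsc{Invoke} is updating in every linearization and has a unique effective linearization step $e_t$.

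First, run $q$ solo from the initial configuration until the \textsc{Invoke} completes; by obstruction-freedom it does. Let $C$ be the configuration immediately before $e_t$. We build two extensions.

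In the first extension, $E_1$, process $p$ runs a complete \textsc{Click} solo from $C$; this is possible by obstruction-freedom. By Definition~\ref{def:effective_linearization}, an isolated \textsc{Read}$(i)$ at $C$ returns \texttt{state$_1$}. The real intuition is that the \textsc{Click} completes while the update is not yet visible, so it must be linearized before the update. Next, $q$ performs $e_t$. By Definition~\ref{def:invoke_independence}, $e_t$ is still the effective linearization step. Finally, $p$ runs \textsc{Observe}$(i)$ solo.

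In the second extension, $E_2$, process $q$ first performs $e_t$ and then $p$ runs the same \textsc{Click} and \textsc{Observe}$(i)$.

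Click-obliviousness says the \textsc{Click} touches no location of component $i$. Component encapsulation says $e_t$ touches only component-$i$ locations. So the two steps access disjoint memory and commute. By Lemma~\ref{lemma:indistinguishable_configurations}, both orders reach the same configuration, and the remainder of both executions is the same \textsc{Observe} by $p$. Hence $p$ returns the same value in $E_1$ and $E_2$.

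For the contradiction, the target is the following. In $E_2$ the update finished before \textsc{Click} was invoked, so real-time order forces \textsc{Observe} to return \texttt{state$_2$}. In $E_1$ the \textsc{Click} must be linearized before the update, forcing \texttt{state$_1$}.

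The main obstacle is the claim about $E_1$. There, the \textsc{Click} is concurrent with the pending \textsc{Invoke}, so linearizability alone may let the \textsc{Click} be ordered after the update. To pin it down, I would insert an invisible \textsc{Read}$(i)$ by $r$ between the \textsc{Click} and $e_t$, running solo after the \textsc{Click} completes. By Lemma~\ref{lemma:invoke_independence_to_read}, inserting \textsc{Click} and \textsc{Read} steps keeps $e_t$ as the effective linearization step. Since the \textsc{Read} writes nothing and the \textsc{Click} does not touch component $i$, the memory of component $i$ is unchanged from $C$. I would then argue that this \textsc{Read} returns \texttt{state$_1$}. The argument is to compare the run with the isolated \textsc{Read} at $C$ via Lemma~\ref{lemma:indistinguishable_configurations}, provided the \textsc{Read} accesses only component-$i$ locations. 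If that is not assumed, I would instead argue from the definition of the effective linearization step together with the independence lemma.

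Given that the \textsc{Read} returns \texttt{state$_1$} and completes before the update is linearized, we have \textsc{Click} $<$ \textsc{Read} $<$ \textsc{Invoke} in any linearization. So \textsc{Observe} must return \texttt{state$_1$} in $E_1$.

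Because the \textsc{Read} is invisible, it leaves no trace, and the configuration indistinguishability argument for $p$ still goes through. The remaining care is to check that inserting the \textsc{Read} into the $E_2$ order is unnecessary: the $p$-view is identical regardless. Then \textsc{Observe} returns the same value in both executions but must return \texttt{state$_1$} in $E_1$ and \texttt{state$_2$} in $E_2$, contradicting linearizability.
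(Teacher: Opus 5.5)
\textbf{The gap is in $E_2$.} You claim that in $E_2$ ``the update finished before \textsc{Click} was invoked.'' But $q$ has executed only $e_t$, and nothing forces the effective linearization step to be the last step of the \textsc{Invoke}; an implementation may perform helping or bookkeeping steps afterwards. You also cannot let $q$ run to completion before the \textsc{Click}. Its remaining steps may write locations outside component $i$ that the \textsc{Click} or \textsc{Observe} reads, which would break your commutation argument; moreover, in $E_1$, $q$ stops right after $e_t$. So in $E_2$ the \textsc{Invoke} is pending throughout the \textsc{Click} and the \textsc{Observe}, and no completed operation witnesses its effect before the \textsc{Click}. Linearizability then allows ordering the \textsc{Invoke} after the \textsc{Click}, so \textsc{Observe} may legally return \texttt{state$_1$} in $E_2$ just as in $E_1$, and there is no contradiction. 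This is exactly the concurrency problem you correctly identified for $E_1$. Your closing remark that a \textsc{Read} in $E_2$ is ``unnecessary'' is therefore backwards.

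\textbf{The repair mirrors your fix for $E_1$, and it is what the paper does.} Right after $e_t$ and before the \textsc{Click}, let $r$ run an isolated \textsc{Read}$(i)$. The prefix through $e_t$ is that of $q$'s solo run, so Definition~\ref{def:effective_linearization} gives that this \textsc{Read} returns \texttt{state$_2$}. It must therefore be linearized after the \textsc{Invoke}. Since it precedes the \textsc{Click} in real time, this forces \textsc{Invoke} $<$ \textsc{Read} $<$ \textsc{Click}, so \textsc{Observe} must return \texttt{state$_2$}. The \textsc{Read} is invisible, so it changes neither shared memory nor $p$'s local state, and your commutation argument still yields identical views for $p$.

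With this change your proof is essentially the paper's; its execution $E$ is: effective linearization step, then \textsc{Read}, then \textsc{Click}. Two differences remain, and both are harmless simplifications: you obtain the pre-step configuration from a solo run of $q$ rather than from the alternating construction $E''$, and you justify the commutation directly via Lemma~\ref{lemma:indistinguishable_configurations}.

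\textbf{A smaller point about $E_1$.} Use only your second route for showing that the inserted \textsc{Read} returns \texttt{state$_1$}. The first route relies on \textsc{Read} touching only component-$i$ locations, which is not assumed; an invisible \textsc{Read} may still read metadata written by \textsc{Click}. Minimality of $e_t$, together with Lemma~\ref{lemma:invoke_independence_to_read} and linearizability, suffices.
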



Theorem~\ref{th:main_oblivious} holds even with only one component and at most three processes. The proof of Theorem~\ref{th:main_oblivious} appears in Section~\ref{sec:first_impossible}. The proof relies on constructing two indistinguishable executions that impose different linearization orders. The formal construction is presented in Subsection~\ref{subsec:first_construction}.

\subsection{Adaptive snapshot with a $k$-bounded click} \label{subsec:non-intrusive}

While the properties introduced above are common and satisfied by all
known implementations, future designs may allow \textsc{Click}
operations to access a limited number of components. We show that our impossibility result holds for such algorithms as well, with somewhat stronger assumptions. To this end, we introduce additional properties and modify the theorem so that it does not rely on
the assumption that \textsc{Click} operations are completely oblivious to the
components. 
We call algorithms satisfying the new set of properties 
\textsl{adaptive snapshots with a $k$-bounded click}. 

As with the previous set of properties, we begin by defining
\textsl{Update linearization independence from \textsc{Click} and \textsc{Invoke}}. This property is
similar to update linearization independence from \textsc{Click}, but additionally
requires that updates to different components be independent. We formally
define it below.

\begin{definition}[Update linearization independence from \textsc{Click} and \textsc{Invoke}]
\label{def:non_intrusive_invoke_independence}
Let $\mathcal{A}$ be an adaptive snapshot algorithm and $E$ be a \execType execution of $\mathcal{A}$ with history
$e_1, e_2, \ldots$ (finite or infinite). Let $p$ be a process that executes an effective linearization step of an updating operation \textsc{Invoke}$(i,op,arg)$. Let $e_t$ denote the step that is the effective linearization step of this operation in $E$. Consider a \execType execution $E'$ whose first $t-1$ events are identical to those of $E$, followed by an arbitrary sequence of steps, not executed by $p$, consisting solely of \textsc{Click} operations, or \textsc{Invoke}$(j,op_2,arg_2)$ operations with $j \neq i$ (either continuing
previously invoked operations or newly invoked ones). After these steps, process $p$ performs step $e_t$, followed by any valid suffix of execution
steps. Then, in execution $E'$, the effective linearization step of
\textsc{Invoke}$(i,op,arg)$ is the event $e_t$, performed by $p$, and the
result of the operation (both the state modification and the returned value)
is identical to its result in execution $E$.

We say that an adaptive snapshot algorithm $\mathcal{A}$ satisfies
\emph{Update linearization independence from \textsc{Click} and \textsc{Invoke}} if the property defined above holds for every \execType execution $E$ of $\mathcal{A}$, every process $p$, and every
\textsc{Invoke}$(i,op,arg)$ operation in $E$.

\end{definition}

\begin{definition}[Adaptive snapshot with a $k$-bounded click]\label{def:properties_non_intrusive}
A linearizable adaptive snapshot algorithm $\mathcal{A}$ that maintains a collection $\mathcal{O}$ of $m$ linearizable components is called an
\textsl{adaptive snapshot with a $k$-bounded click} algorithm if it implements the adaptive snapshot object and satisfies the
following properties:
\begin{enumerate}
    \item \label{prop:limit_memory}
    \emph{$k$-bounded click:}
    Each \textsc{Click} operation accesses no more than $k$ components. 

    \item \label{prop:obstruction_free}
    \emph{Progress guarantee:}
$\mathcal{A}$ is an obstruction-free adaptive snapshot algorithm.

    \item \label{prop:non_intrusive_linearize}
    \emph{Update linearization independence from \textsc{Click} and \textsc{Invoke}:}
    $\mathcal{A}$ satisfies \emph{Update linearization independence from \textsc{Click} and \textsc{Invoke}} as in Definition~\ref{def:non_intrusive_invoke_independence}.

    \item \label{prop:act_on_memory}
    \emph{Component encapsulation:}
    Component~$\mathcal{O}[i]$ is modified only through \textsc{Invoke}$(i,op,arg)$ operations. Moreover, in \execType executions, the effective linearization step of
    an updating operation, accesses and modifies only the portion of shared memory associated with component~$\mathcal{O}[i]$.


    
\end{enumerate}
\end{definition}

Property~\ref{prop:limit_memory} bounds the number of memory locations accessed by a \textsc{Click} operation. 
For appropriate
choices of $k$, this restriction rules out expensive \textsc{Click} operations and, in particular, classical (non-adaptive) snapshot algorithms in which a
snapshot explicitly records the entire memory.

Property~\ref{prop:non_intrusive_linearize} requires that, similarly
to Property~\ref{prop:oblvious_click_linearize} in
Definition~\ref{def:properties_oblivious_click}, steps of
\textsc{Click} operations do not interfere with a pending effective
linearization step of an updating operation. In addition, steps of
\textsc{Invoke} operations on other components must not interfere
with such effective linearization steps either.

The other two properties are similar to those in Definition~\ref{def:properties_oblivious_click}. 

\ignore{

As in the previous set of properties, Property~\ref{prop:obstruction_free}
requires obstruction-freedom, a weak progress guarantee. This excludes
implementations that block all operations in order to initiate a snapshot.


\textcolor{red}{
Property~\ref{prop:non_intrusive_linearize} states that \textsc{Click} operations, as well as \textsc{Invoke} operations on other components, do not interfere with the effective linearization steps of updating operations. Note that such operations may still affect the actual linearization point of an \textsc{Invoke} operation. However, they do not affect the specific \emph{effective} linearization step of the updating \textsc{Invoke} operation, as observed by solo \textsc{Read} operations. }

Finally, Component encapsulation (Property~\ref{prop:act_on_memory}) is as in the previous set of properties.

}

Our second theorem states an impossibility similar to the one in Theorem~\ref{th:main_oblivious} for adaptive snapshots with a $k$-bounded click. 

\begin{theorem}\label{th:main_non_intrusive}
Let $k=\min\{ m-1, n-3 \}$. A linearizable adaptive snapshot with a $k$-bounded click algorithm cannot employ invisible reads.  
\end{theorem}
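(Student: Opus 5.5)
We argue by contradiction and reuse the indistinguishability scheme behind Theorem~\ref{th:main_oblivious}, with one preliminary step that produces a component the \textsc{Click} never touches. Fix an algorithm $\mathcal{A}$ with a $k$-bounded click and invisible \textsc{Read} operations. Let $k=\min\{m-1,n-3\}$, so that $m\ge k+1$ and $n\ge k+3$.

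\textbf{Finding an untouched component.} Let the scanner $c$ start from the initial configuration and run a \textsc{Click} solo. By obstruction-freedom the \textsc{Click} completes, and by Property~\ref{prop:limit_memory} it accesses at most $k$ components. We want the component $i$ used later to lie outside this set. The set is not fixed in advance, since the \textsc{Click} may choose its accesses adaptively depending on what concurrent updaters have done. To control this, we plan an iterative covering construction using the $n-3\ge k$ spare processes $q_1,\dots,q_k$:
\begin{enumerate}
\item Run $c$'s \textsc{Click} step by step.
\item Whenever it is about to access a new component $j$, bring a fresh process $q_\ell$ to the point just before the effective linearization step of an updating \textsc{Invoke} on $j$. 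Such a point exists by Observation~\ref{obs:elp_exist}.
\end{enumerate}
By Definition~\ref{def:non_intrusive_invoke_independence}, these pending steps keep their status no matter which \textsc{Click} steps or \textsc{Invoke} steps on other components are interleaved. At most $k$ components are touched, and $m\ge k+1$, so some component $i$ is never accessed by this \textsc{Click}. This leaves three processes free: the scanner/observer $c$, a writer $w$ on $i$, and a reader $r$.

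\textbf{Main construction on $i$.} Bring $w$ to just before the effective linearization step $e$ of an updating $\textsc{Invoke}(i,op,arg)$ that changes $s_1$ to $s_2$. We compare two executions. In $E_1$, $w$ performs $e$, then $r$ runs an invisible $\textsc{Read}(i)$ that returns $s_2$, and then $c$ runs its \textsc{Click}. In $E_2$, $c$ runs its \textsc{Click} first, and then $w$ performs $e$.

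Since the \textsc{Click} never accesses component $i$, and by Property~\ref{prop:act_on_memory} $e$ touches only memory of component $i$, these two steps commute. So $c$'s local state and all non-$i$ memory agree in both executions. The \textsc{Read} writes nothing, so it changes no shared memory. In both executions we then let $c$ run $\textsc{Observe}(i)$ solo after $w$ completes. The completion of $w$'s \textsc{Invoke}, and the fact that $e$ remains its effective linearization step with the same result, follow from Lemma~\ref{lemma:invoke_independence_to_read} together with Definition~\ref{def:non_intrusive_invoke_independence}.

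We then need $w$'s remaining steps to leave the same memory in both runs. This is the subtle point: $w$ can see a difference if it reads locations written by the \textsc{Click}. To handle this, we follow the paper's first construction and let $w$ finish only after the whole prefix. Since $w$'s local state and the memory it reads coincide in the two runs, Lemma~\ref{lemma:indistinguishable_configurations} applies and the resulting configurations are indistinguishable to $c$. Hence \textsc{Observe} returns the same value in both.

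\textbf{Deriving the contradiction.} In $E_1$ the \textsc{Read} returns $s_2$ and finishes before the \textsc{Click} starts. Real-time order then forces the \textsc{Invoke} to be linearized before the \textsc{Click}, so \textsc{Observe} must return $s_2$. In $E_2$ the \textsc{Click} completes before $e$, which is the first step after which the new state is visible. We need to show that the \textsc{Invoke} must be linearized after the \textsc{Click}, so that \textsc{Observe} must return $s_1$. If we cannot argue this directly, we would add a \textsc{Read} by $r$ right after the \textsc{Click} and before $e$: it returns $s_1$ and is invisible, so indistinguishability is preserved. The two required answers differ, which contradicts linearizability.

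\textbf{Main obstacle.} We expect the hardest part to be the adaptive covering step: making sure that the pending updaters placed on the components the \textsc{Click} touches neither change $c$'s choice of $i$ nor break component-serialization. That is why we need a separate updater for each of the up to $k$ touched components, which accounts for the bound $n-3$ in the definition of $k$.
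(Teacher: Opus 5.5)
\textbf{There is a genuine gap in how you choose the untouched component $i$.}

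You fix $i$ from a run of the \textsc{Click} in which $w$ has taken no steps, and only afterwards bring $w$ to just before its effective linearization step on $i$. That prefix of $w$'s \textsc{Invoke}$(i,op,arg)$ may write to shared memory not associated with any component. Property~\ref{prop:act_on_memory} restricts only the effective linearization step itself, so $w$ could, for example, announce a pending update on $i$. The \textsc{Click} in $E_1$ and $E_2$ starts from a configuration that contains this prefix. It is therefore a different run from the one you used to pick $i$, and nothing stops it from spending one of its $k$ accesses on exactly $\mathcal{O}[i]$.

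Your sentence ``the \textsc{Click} never accesses component $i$'' in the main construction is precisely the unproven claim, and the commutation of $e$ with the \textsc{Click} fails without it. The choice of $i$ depends on the \textsc{Click}, the \textsc{Click} depends on $w$'s prefix, and $w$'s prefix depends on $i$. The covering processes $q_\ell$ do not break this cycle. They are poised on components the \textsc{Click} has already chosen to access, they never take their steps, and they play no further role. The pigeonhole over $m\ge k+1$ components holds for any single run without them.

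The missing idea, which is how the paper uses its $k+3$ processes, is to make \emph{all} candidate components pending before the \textsc{Click} runs:
\begin{enumerate}
\item Bring writers $p_1,\dots,p_{k+1}$ on components $1,\dots,k+1$, one after another and each with alternating invisible reads, to just before their effective linearization steps.
\item In $E$, let all of them take those steps, read every component, and then run the \textsc{Click}.
\item Only now pick $j\le k+1$ such that this \textsc{Click} does not access $\mathcal{O}[j]$; such a $j$ exists by Property~\ref{prop:limit_memory}.
\item In $E'$, withhold $p_j$'s step until after the \textsc{Click} and a second round of reads.
\end{enumerate}
The configurations before the two \textsc{Click}s differ only in the memory of $\mathcal{O}[j]$, which the \textsc{Click} in $E$ never accesses. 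So Lemma~\ref{lemma:indistinguishable_configurations} makes the \textsc{Click} in $E'$ behave identically. Your remaining steps (a read on each side to pin the order, commuting the withheld step with the \textsc{Click}, and a solo \textsc{Observe}) then go through essentially as you wrote them.
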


In Section~\ref{sec:second_impossible}, we prove Theorem~\ref{th:main_non_intrusive}. The proof follows a construction similar to that of the previous case. However, since a \textsc{Click} operation may access up to $k$ components, the construction employs $k+3$ processes and $k+1$ components. The formal construction is presented in Subsection~\ref{subsec:second_construction}.


\section{Adaptive snapshot with oblivious click cannot employ invisible reads} \label{sec:first_impossible}

We begin with an informal discussion. The main challenge in executing a \textsc{Click} concurrently with updates lies in determining which concurrent updates occur before the snapshot and which occur after it. The key idea of the impossibility proof is to show that a concurrent update to a component may linearize either early (before the \textsc{Click}) or late (after the \textsc{Click}) in a manner that is indistinguishable from the perspective of the process executing the \textsc{Click}. Traditionally, \textsc{Click} implementations address this problem by retroactively adjusting the linearization order of operations. However, in the presence of \textsc{Read} operations, the ability to reorder updates, whose effects are observable by reads, is restricted. In particular, an update to a component cannot be moved past a \textsc{Read} that observes either the updated state or the preceding state of the same component. Snapshot algorithms typically overcome the witnessing of update timing by reads by making readers help earlier updates resolve their linearization order relative to the \textsc{Click}. However, when reads are invisible, they are not permitted to write to shared memory, making it impossible to linearize the \textsc{Click} operation correctly.  We formalize this intuition in the proof below.

Let $p$ be a process that executes a \textsc{Click}. We use the \emph{Click-obliviousness} property (Property~\ref{prop:limit_memory} of Definition~\ref{def:properties_oblivious_click}) to deduce that the state of any component cannot be read by $p$ during the execution of the \textsc{Click} operation. Consequently, the \textsc{Click} is oblivious to whether an effective linearization step of a concurrent update occurs  before or after its linearization point. Thus, we show that this results in two possible linearization orders that are indistinguishable with respect to process~$p$.

To exploit this observation, we construct two executions in which the relative order of a \textsc{Click} and a concurrent update differs, although the processes that must determine this order have indistinguishable views. Concurrent invisible \textsc{Read} operations fix the relative order of the operations but cannot report the order to other processes because they do not modify shared memory. Thus, we obtain that one of these two executions must yield a non-linearizable execution. 
Otherwise, subsequent \textsc{Observe} operations must return different values in the two executions, leading to a contradiction. We now formalize these ideas. 

\subsection{Construction of the two indistinguishable executions } \label{subsec:first_construction}

Assume, by way of contradiction, that there exists a linearizable adaptive snapshot with oblivious click algorithm $\mathcal{A}$ (see Definition~\ref{def:properties_oblivious_click}) that implements invisible \textsc{Read} operations and maintains a collection $\mathcal{O}$ of $m$ linearizable components. We begin by constructing two executions, denoted $E$ and $E'$, which impose different linearization orders between one updating operation and a \textsc{Click}, implying that the snapshot obtained must differ between the two executions. However, we show that the two executions are indistinguishable with respect to a process that executes a \textsc{Click} and an \textsc{Observe}. This implies that the returned value from the \textsc{Observe} operation must be erroneous in one of these executions, yielding a contradiction.

Recall that $m$ denotes the number of components in the snapshot object and $n$ denotes the number of processes in the system, labeled $p_1,\ldots,p_n$. 
We know that  $\mathcal{A}$ is linearizable, and that all operations of $\mathcal{A}$ and all component operations are obstruction-free (Property~\ref{prop:oblivious_obstruction_free} of Definition~\ref{def:properties_oblivious_click}). Executions $E$ and $E'$ are constructed so that, in each execution, there is exactly one \textsc{Invoke} operation, which is an updating operation, and it is invoked on component $\mathcal{O}[1]$. Therefore, $E$ and $E'$ are \execType executions, and every completed updating operation in these executions has an effective linearization step (see Observation~\ref{obs:elp_exist}).

We construct the executions using one process, $p_1$, to perform a single updating \textsc{Invoke}$(1,op,arg)$ operation. We additionally use two processes: process $p_{read}$ performs invisible \textsc{Read} operations, and process $p_{click}$ performs a single \textsc{Click} operation. Consequently, if the \textsc{Click} operation is invoked before the linearization point of this
\textsc{Invoke} operation, then a subsequent \textsc{Observe}$(1)$ operation must return the initial state of $\mathcal{O}[1]$. In contrast, if a \textsc{Click} operation is invoked after the linearization point of this
\textsc{Invoke} operation, then a subsequent \textsc{Observe}$(1)$ operation must return a different state of $\mathcal{O}[1]$.

We consider a legal initial configuration in which every component $\mathcal{O}[i]$, for $i\in[m]$, is in the same state $o_1$. It suffices to establish the impossibility from such a configuration. Let $op$ be an update operation and let $arg$ be a parameter such that applying $op$ with parameter $arg$ to a component in state $o_1$ changes its state. Since all components are instances of the same object type, this update has the same effect on every component in state $o_1$. We denote the resulting state by $o_2$, where $o_2\neq o_1$.

We will focus on component~$\mathcal{O}[1]$. We now construct an initial execution $E''$, which will be extended to obtain $E$ and $E'$. In $E''$, processes $p_1$ and $p_{read}$ alternate: $p_1$ executes an \textsc{Invoke}$(1,op,arg)$ operation one step at a time, and between consecutive steps of this operation, $p_2$ executes a complete \textsc{Read}$(1)$ operation. Formally, $E''$ is defined as follows.

\begin{enumerate}
\item \label{step:oblivious_invoke} Processes $p_{read}$ and $p_1$ are alternating in the execution in the following manner. 
\begin{enumerate}
\item \label{step:oblivious_read} Process $p_{read}$ executes a  \textsc{Read}$(1)$ operation to completion. If the returned state is $o_2$, the construction of $E''$ is completed.
\item \label{step:oblivious_next_event} Process $p_1$ performs one step of its \textsc{Invoke}$(1,op,arg)$ operation. Once the operation completes, $p_1$ takes no further steps.
\end{enumerate}
\end{enumerate}

We first prove that the construction of $E''$ completes and yields a  well-defined execution.  We note that the operations that are invoked are operations of the snapshot object with adequate parameters, executed concurrently. It remains to show that the construction completes. Informally, using obstruction-freedom, the updating operation must complete when run in isolation, and invisible reads cannot change that. Therefore, the component state must change to $o_2$ after finitely many steps, and the construction ends when the \textsc{Read} operation observes the resulting state. This intuition is formalized in Lemma~\ref{lemma:execution_valid}.

\begin{lemma} \label{lemma:execution_valid}
    Execution $E''$ is a valid finite execution of Algorithm~$\mathcal{A}$.
\end{lemma}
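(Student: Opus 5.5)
The plan is to establish two facts: every iteration of the alternation is well defined, and the alternation stops after finitely many iterations.

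\textbf{Well-definedness of each iteration.} Each step of $p_1$ is a step of its \textsc{Invoke}$(1,op,arg)$ operation, taken from a reachable configuration, so it is a legal event. Each \textsc{Read}$(1)$ by $p_{read}$ runs in isolation. By obstruction-freedom (Property~\ref{prop:oblivious_obstruction_free} of Definition~\ref{def:properties_oblivious_click}) it therefore completes within finitely many steps. Hence every iteration of the alternation is a finite segment, and every finite prefix of the construction is a valid execution of $\mathcal{A}$.

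\textbf{Termination.} Suppose, for contradiction, that the construction never stops, so that every \textsc{Read}$(1)$ returns a state other than $o_2$. The key point is that invisible reads leave shared memory unchanged. By Definition~\ref{def:invisible_read}, each complete \textsc{Read}$(1)$ performs no write or RMW. So the shared-memory contents after any iteration are exactly those obtained by deleting all $p_{read}$ steps. The local state of $p_1$ is untouched by $p_{read}$. Applying Lemma~\ref{lemma:indistinguishable_configurations} inductively, the sequence of $p_1$ steps is exactly the solo run of \textsc{Invoke}$(1,op,arg)$ from the initial configuration.

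By obstruction-freedom, that solo run completes after some finite number $s$ of steps. So after at most $s$ iterations $p_1$ has completed its \textsc{Invoke}. Consider the configuration $C$ reached when $p_1$'s operation has responded. Since $o_1 \neq o_2$, the operation is an updating operation. The only operations so far are this single \textsc{Invoke} and complete, non-overlapping \textsc{Read}s, and they are in particular not concurrent with another \textsc{Invoke}, so the execution is component-serialized.

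Now consider the \textsc{Read}$(1)$ executed by $p_{read}$ right after the \textsc{Invoke} completes. It starts after the \textsc{Invoke}'s response, so by real-time order and linearizability it must be linearized after the \textsc{Invoke}. It must therefore return $o_2$, and the construction stops. Equivalently, one can appeal to Observation~\ref{obs:elp_exist}: an effective linearization step exists, and a solo \textsc{Read} run after it returns $o_2$. This contradicts the assumption of infinitely many iterations, so $E''$ is finite.

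\textbf{Main obstacle.} The hardest part is to state the indistinguishability step cleanly. One must check that interleaved complete reads neither alter any shared location $p_1$ later accesses nor alter $p_1$'s local state, so that $p_1$'s steps coincide with its solo steps and obstruction-freedom applies. I would do this by induction on the number of iterations. The inductive invariant is that the configuration after iteration $j$ agrees with the solo configuration after $j$ steps of $p_1$ on shared memory and on $p_1$'s local state. Lemma~\ref{lemma:indistinguishable_configurations} then yields the next step.
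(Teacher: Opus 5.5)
Your proposal is correct and takes essentially the same approach as the paper. Obstruction-freedom makes each isolated \textsc{Read}$(1)$ finite, and invisibility plus an induction with Lemma~\ref{lemma:indistinguishable_configurations} shows that $p_1$'s steps match its terminating solo run in both its local state and shared memory. The only difference is the ending. You conclude directly from linearizability and real-time order once the \textsc{Invoke} has responded, giving at most $s+1$ iterations. The paper instead locates the effective linearization step of the solo run and transfers the \textsc{Read}'s result through identical shared memory, using your real-time argument only as a supplement; your version avoids having to compare $p_{read}$'s local state between the two runs.
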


\begin{proof}
For lack of space, the proof is postponed to Appendix~\ref{proof:execution_valid}.
\end{proof}

Consider the final iteration of the construction. In this iteration, \textsc{Read}$(1)$ returns $o_2$, whereas every preceding \textsc{Read}$(1)$ operation returns a different state. Additionally, \textsc{Read} operations performed by $p_{read}$ are invisible and do not modify the shared memory. Consequently, by Definition~\ref{def:effective_linearization}, the final step taken
by $p_1$ in $E''$ must be the effective linearization step of the
\textsc{Invoke} operation.

Let $C_0$ be the initial configuration, $Q$ be the configuration before the last step of $p_1$ in $E''$, and $Q_0$ be the configuration after the execution of this step. 
We now construct two execution extensions, denoted $E$ and $E'$, starting at
the configurations $Q_0$ and $Q$, respectively. The executions $E$ and $E'$ are depicted in Figure~\ref{fig:oblivious_diagram}.

Execution $E$ linearizes the \textsc{Click} after the updating operation. Starting at configuration $Q_0$, process $p_{read}$ executes \textsc{Read}$(1)$ to completion, after which process $p_{click}$ executes \textsc{Click}$()$ to completion. Execution $E'$ linearizes the \textsc{Click} before the update. Starting at configuration $Q$, process $p_{read}$ executes \textsc{Read}$(1)$ to completion, process $p_{click}$ then executes \textsc{Click}$()$ to completion, and $p_{read}$ executes another \textsc{Read}$(1)$ to completion. Finally, $p_1$ takes one step.

Since both extensions only add operations that execute from start to end with no concurrent interruptions, and by obstruction-freedom, both extensions are well-defined and terminate. The executions $E$ and $E'$ are similar except for the relative order between the \textsc{Click} operation, the invisible \textsc{Read}$(1)$ operation by $p_{read}$, and the step of $p_1$. This step constitutes the effective linearization step of the updating \textsc{Invoke} operation in execution $E$ by its definition, and we will show that it constitutes the effective linearization step of the updating \textsc{Invoke} operation in Execution $E'$ as well.    

\begin{figure}[!htbp]
\centering
\begin{tikzpicture}[
    x=1.1cm,
    y=1cm,
    timeline/.style={thick},
    cfg/.style={circle,fill=black,inner sep=1.5pt},
    execstep/.style={-Stealth,thick},
    sim/.style={dashed,thick}
]


\tikzset{sim/.style={draw=none}} 

\def\yE{0}
\def\yEp{-1.5}

\draw[timeline] (0.5,\yE) -- (12,\yE);
\draw[timeline] (0.5,\yEp) -- (12,\yEp);

\node[left] at (0.5,\yE) {$E$};
\node[left] at (0.5,\yEp) {$E'$};

\node[cfg,label=above:$C_0$] (C0) at (0.9,\yE) {};
\node[cfg,label=below:$C_0$] (C0p) at (0.9,\yEp) {};
\draw[sim] (C0) -- (C0p);

\node[cfg,label=above:$Q$] (C) at (1.5,\yE) {};
\node[cfg,label=below:$Q$] (Cp) at (1.5,\yEp) {};
\draw[sim] (C) -- (Cp);

\node[cfg,label=above:$Q_0$] (D) at (3.5,\yE) {};
\draw[execstep](C) --
  node[above, align=center, yshift=0pt]
  {\small
   $p_1$ linearizes\\
   \small the update }
  (D);

\draw[sim] (D) -- (Cp);

\node[cfg,label=above:$Q_1$] (Q1) at (5.75,\yE) {};
\draw[execstep](D) --
  node[above, align=center, yshift=0pt]
  {\small
   $p_{read}$ executes\\
   \small \textsc{Read}$(1)$}
  (Q1);

\node[cfg,label=above:$C$] (Q2) at (8,\yE) {};
\draw[execstep](Q1) --
  node[above, align=center, yshift=0pt]
  {\small
   $p_{click}$ executes\\
   \small \textsc{Click} }
  (Q2);

\node[cfg,label=below:$Q'_1$] (Q1p) at (5.75,\yEp) {};
\draw[execstep](Cp) --
  node[above, align=center, yshift=0pt]
  {\small
   $p_{read}$ executes\\
   \small \textsc{Read}$(1)$}
  (Q1p);

\node[cfg,label=below:$Q'_2$] (Q2p) at (8,\yEp) {};
\draw[execstep](Q1p) --
  node[above, align=center, yshift=0pt]
  {\small
   $p_{click}$ executes\\
   \small \textsc{Click} }
  (Q2p);

\draw[sim] (Q1) -- (Q1p);

\draw[sim] (Q2) -- (Q2p);

\node[cfg,label=below:$Q'_3$] (Q3p) at (10,\yEp) {};
\draw[execstep](Q2p) --
  node[above, align=center, yshift=0pt]
  {\small
   $p_{read}$ executes\\
   \small \textsc{Read}$(1)$}
  (Q3p);

\draw[sim] (Q2) -- (Q3p);

\node[cfg,label=below:$C'$] (Q4p) at (12,\yEp) {};
\draw[execstep](Q3p) --
  node[above, align=center, yshift=0pt]
  {\small
   $p_1$ linearizes\\
   \small the update }
  (Q4p);

\draw[sim] (Q2) -- (Q4p);


\end{tikzpicture}

\caption{A depiction of executions $E$ and $E'$.} \label{fig:oblivious_diagram}
\end{figure}


Let $Q_1$ and $Q_1'$ denote the configurations reached in $E$ and $E'$, respectively, after $p_{read}$ completes the first $\textsc{Read}(1)$ operation that is not part of $E''$.  Let $C$ and $Q_2'$ denote the configurations
reached after $p_{click}$ completes a \textsc{Click} operation starting
from $Q_1$ and $Q_1'$, respectively. Let $Q_3'$ denote the
configuration reached after $p_{read}$ completes a $\textsc{Read}(1)$
operation starting at $Q_2'$, and finally let $C'$ denote the
configuration reached after $p_1$ takes a single step from $Q_3'$.

We denote the $\textsc{Invoke}(1,op,arg)$ operations in executions $E$
and $E'$ by $op_1$ and $op_1'$, respectively. The $\textsc{Read}(1)$
operation executed from configuration $Q_0$ in $E$ is denoted by $op_2$, and
the $\textsc{Read}(1)$ operation executed from configuration $Q_2'$ in $E'$ is
denoted by $op_2'$. Finally, we denote the \textsc{Click} operations
in $E$ and $E'$ by $op$ and $op'$, respectively.

We now claim that the two
configurations $C$ and $C'$ are indistinguishable to process $p_{click}$, and that the
shared-memory state is identical in both configurations. This claim is
proved in the lemma below.

\begin{lemma} \label{lemma:oblivious_conf_similar}
Let $C$ and $C'$ be the configurations defined above. Then
$C \isim{p_{click}} C'$, and the shared-memory state in both configurations
is identical.
\end{lemma}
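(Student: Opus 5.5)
The plan is to track shared memory and $p_{click}$'s local state step by step along both executions, using Lemma~\ref{lemma:indistinguishable_configurations}, invisibility of reads, click-obliviousness, and component encapsulation.

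Let $e$ denote the last step of $p_1$ in $E''$, so that $Q_0 = Qe$. By Property~\ref{prop:oblivious_act_on_memory} of Definition~\ref{def:properties_oblivious_click}, $e$ accesses and modifies only locations associated with $\mathcal{O}[1]$. Hence $Q$ and $Q_0$ agree on every shared location not associated with any component, and on all locations of other components. They may differ on the locations of $\mathcal{O}[1]$ and on $p_1$'s local state. The local state of $p_{click}$ is identical in $Q$ and $Q_0$: it is the initial state, since $p_{click}$ has taken no steps in $E''$.

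\textbf{The first read.} Going from $Q_0$ to $Q_1$, and from $Q$ to $Q_1'$, $p_{read}$ runs an invisible \textsc{Read}$(1)$. By Definition~\ref{def:invisible_read}, shared memory is unchanged, so $Q_1$ has the same memory as $Q_0$ and $Q_1'$ has the same memory as $Q$. Only $p_{read}$'s local state changes, possibly differently in the two executions. That does not matter, since we only care about $p_{click}$. Thus $Q_1 \isim{p_{click}} Q_1'$, and the memories of $Q_1$ and $Q_1'$ differ at most on $\mathcal{O}[1]$'s locations.

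\textbf{The \textsc{Click}.} By Property~\ref{prop:oblivious_limit_click}, every location $p_{click}$ accesses during the \textsc{Click} is not associated with any component. All such locations have equal values in $Q_1$ and $Q_1'$. Applying Lemma~\ref{lemma:indistinguishable_configurations} with $P=\{p_{click}\}$, the same sequence of $p_{click}$-steps $\sigma$ can occur from $Q_1'$, and $C = Q_1\sigma \isim{p_{click}} Q_1'\sigma = Q_2'$. One subtlety needs care here. Lemma~\ref{lemma:indistinguishable_configurations} only guarantees that the same step sequence is executable; I must also argue that the writes in $\sigma$ hit the same non-component locations with the same values. That follows because the steps are identical and every RMW result depends only on $p_{click}$'s state and the accessed locations. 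So after $\sigma$, memories still agree everywhere except $\mathcal{O}[1]$'s locations. On those locations, $C$ holds $Q_0$'s values and $Q_2'$ holds $Q$'s values.

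\textbf{Closing the gap.} The \textsc{Read}$(1)$ from $Q_2'$ to $Q_3'$ is invisible, so memory and $p_{click}$'s state are unchanged. Then $p_1$ takes one step. I must show this step is exactly $e$ and yields the same writes as in $Q\to Q_0$. The local state of $p_1$ in $Q_3'$ equals its state in $Q$, since $p_1$ took no steps in between. By click-obliviousness and invisibility, the locations $e$ accesses, which by encapsulation are only $\mathcal{O}[1]$'s locations, hold the same values in $Q_3'$ as in $Q$. By Lemma~\ref{lemma:indistinguishable_configurations} applied to $p_1$ and the single step $e$, it produces the same new values on $\mathcal{O}[1]$'s locations as in $Q\to Q_0$. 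Hence in $C'$ those locations equal their values in $Q_0$, which equal their values in $C$, since nothing touched them after $Q_0$ in $E$. All other locations already agree, and $p_{click}$'s state is untouched by $e$. Therefore $C \isim{p_{click}} C'$ with identical shared memory.

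The main obstacle is this final step: showing that $e$ overwrites every location on which $Q$ and $Q_0$ differ in the same way. If $e$ is an RMW whose new value depends only on the old values it reads, equality follows, because the old values coincide. I would also invoke Lemma~\ref{lemma:invoke_independence_to_read} to confirm that $e$ remains the effective linearization step in $E'$, though equality of memory itself needs only the determinism argument above.
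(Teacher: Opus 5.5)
Your proof is correct. Up to and including the \textsc{Click}, it follows the paper's argument: encapsulation for $Q$ versus $Q_0$, invisibility for the reads, and click-obliviousness together with Lemma~\ref{lemma:indistinguishable_configurations} for the \textsc{Click}.

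The two proofs differ only in the final step.
\begin{itemize}
\item The paper invokes Lemma~\ref{lemma:invoke_independence_to_read}, and hence update linearization independence from \textsc{Click}. It concludes that $p_1$'s step from $Q_3'$ is the effective linearization step and modifies the locations of $\mathcal{O}[1]$ exactly as $e$ does from $Q$.
\item You argue directly. The local state of $p_1$ is the same in $Q_3'$ as in $Q$. By the ``accesses only'' clause of Property~\ref{prop:oblivious_act_on_memory}, $e$ touches only the locations of $\mathcal{O}[1]$, and these are untouched by the invisible reads and the oblivious \textsc{Click}. So by determinism (Lemma~\ref{lemma:indistinguishable_configurations} applied to $p_1$), it is literally the same step with the same writes.
\end{itemize}

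Your route is more elementary. It shows that the memory-equality part of the lemma does not need the independence property at all. It also gives equality of concrete memory values explicitly. By contrast, Lemma~\ref{lemma:invoke_independence_to_read} is phrased in terms of the operation's effect (state modification and return value). So the paper's claim that the step modifies memory ``in exactly the same way'' tacitly relies on the determinism reasoning you spell out.

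The paper's route does buy one thing. It establishes at the same time that the step remains the effective linearization step in $E'$, which Lemma~\ref{lemma:order_of_linearization} needs. That fact cannot come from your determinism argument alone, since a priori a \textsc{Click} step could be what makes solo reads return $o_2$. As you note, it still has to come from Lemma~\ref{lemma:invoke_independence_to_read}.
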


\ignore{

\begin{proof}
Informally, the memory accesses of the \textsc{Click} and the effective linearization step of the \textsc{Invoke} operation do not access the same portion of shared memory, given the assumptions we made. This implies indistinguishability. Also, for the same reason, modifications of the shared memory by the \textsc{Click} and by the effective linearization step of the updating operation are independent and their order does not impact the final state of the shared space.
Due to lack of space, the formal proof is postponed to Appendix~\ref{proof:oblivious_conf_similar}.
\end{proof}

}

\begin{proof}
We begin by comparing configurations $Q$ and $Q_0$. As explained earlier,
the transition from $Q$ to $Q_0$ in execution $E$ consists of a single step taken by
$p_1$, in which it performs the effective linearization step of the
$\textsc{Invoke}(1,op,arg)$ operation. 

By Property~\ref{prop:oblivious_act_on_memory} of Definition~\ref{def:properties_oblivious_click}, this step does not modify
any shared-memory locations except possibly, shared memory associated with $\mathcal{O}[1]$. Hence, for process $p_{click}$, which has not yet taken
part in the execution and has not yet accessed the shared memory, the local state is identical in $Q$ and $Q_0$
(i.e., $Q_0 \isim{p_{click}} Q$), and the shared memory may differ only in shared-memory locations associated with~$\mathcal{O}[1]$.

Then, configurations $Q_1$ and $Q_1'$ are obtained by
$p_{read}$ completing a $\textsc{Read}(1)$ operation, starting at configurations $Q_0$ and $Q$, respectively. Since reads are
invisible in $\mathcal{A}$, $p_{read}$ does not modify shared memory during
this step, and $p_{click}$ does not change its local state, since it does not perform a step. Therefore,
$Q_1 \isim{p_{click}} Q_1'$, and the shared-memory states may differ only in shared-memory locations associated with~$\mathcal{O}[1]$.

Next, configurations $C$ and $Q_2'$ are obtained by having $p_{click}$ complete a \textsc{Click} operation, starting at configurations $Q_1$ and $Q_1'$, respectively. By Property~\ref{prop:oblivious_limit_click} of Definition~\ref{def:properties_oblivious_click}, $p_{click}$ does not access $\mathcal{O}[1]$ during the execution of this operation in $E$. Therefore, all shared-memory locations accessed during the \textsc{Click} execution have identical values in $Q_1$ and $Q'_1$.

By Lemma~\ref{lemma:indistinguishable_configurations}, executing the \textsc{Click} operation from either configuration results in identical modifications to the shared memory and leaves $p_{click}$ in the same local state at the end of both executions. Consequently, $C \isim{p_{click}} Q_2'$, and the shared-memory states in the two configurations may differ only in shared-memory locations associated with~$\mathcal{O}[1]$.

From $Q_2'$, configuration $Q_3'$ is obtained by $p_{read}$ completing another $\textsc{Read}(1)$ operation. Since reads are invisible, shared memory is unchanged and $p_{click}$’s local state remains the same. Therefore, $C \isim{p_{click}} Q_3'$, and again the shared-memory states may differ only in shared-memory locations associated with~$\mathcal{O}[1]$.

Finally, configuration $C'$ is obtained from $Q_3'$ by a single step of $p_1$. Let $e$ denote the last step before configuration $Q_0$ in execution $E$. The prefixes of $E$ and $E'$ up to step $e$ are identical, and $e$ is the effective linearization step of the $\textsc{Invoke}(1,op,arg)$ operation in $E$. Observe that $Q_3'$ is obtained from $Q$, the configuration immediately preceding $e$ in $E$, by executing only \textsc{Click} and \textsc{Read} steps, with no process executing steps of both types. Additionally, in $E$, there is one \textsc{Invoke} operation, which is updating. By
Lemma~\ref{lemma:invoke_independence_to_read}, the next step of $p_1$ from $Q_3'$ is the effective linearization step of the
\textsc{Invoke}$(1,op,arg)$ operation in $E'$, and it modifies the shared-memory locations associated with~$\mathcal{O}[1]$ in exactly the same way as $e$ does when executed from $Q$. Since the difference between the shared-memory states of the configurations $Q_3'$ and $C$ is precisely the effect of step $e$ on the shared memory associated with $\mathcal{O}[1]$, the configuration obtained after $p_1$ performs $e$ from $Q'_3$, namely $C'$, has the same shared-memory state as $C$. Moreover, $p_{click}$ does not take any steps between $Q_3'$ and $C'$, and since $Q_3' \isim{p_{click}} C$, it follows that $C' \isim{p_{click}} C$.
\end{proof}

We next establish the opposite relative order of the linearization points of $op$ and $op_1$ in $E$, and of $op'$ and $op_1'$ in $E'$.

\begin{lemma} \label{lemma:order_of_linearization}
In execution $E$, the linearization point of $op_1$ (the \textsc{Invoke}) occurs before the linearization point of $op$ (the \textsc{Click}), whereas in execution $E'$, the linearization point of $op_1'$ (the \textsc{Invoke}) occurs after the linearization point of $op'$ (the \textsc{Click}).
\end{lemma}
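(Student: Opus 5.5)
The plan is to argue separately about $E$ and $E'$, using in each case the real-time constraints of the linearizability definition (Definition~\ref{def:linearizability}) together with the return values of the invisible \textsc{Read}$(1)$ operations. These reads pin down the position of the update relative to the \textsc{Click}. Throughout, the only \textsc{Invoke} is the updating operation on $\mathcal{O}[1]$, which changes its state from $o_1$ to $o_2$. Since $E$ and $E'$ are component-serialized, Observation~\ref{obs:updating_op_globabl} lets us speak of ``the state of $\mathcal{O}[1]$ before/after the update'' independently of the chosen linearization.

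\emph{Execution $E$.} The read $op_2$ starts at $Q_0$, right after the effective linearization step $e$. By Definition~\ref{def:effective_linearization}, an isolated \textsc{Read}$(1)$ from $Q_0$ returns $o_2$; since $op_2$ runs solo from $Q_0$, it returns $o_2$.

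In any linearization of $E$, the state of component $1$ at $op_2$'s linearization point is $o_2$. Because only one \textsc{Invoke} exists and $o_2\neq o_1$, the point of $op_1$ must precede that of $op_2$.

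The \textsc{Click} $op$ is invoked after $op_2$ responds, so by real-time order the point of $op_2$ precedes that of $op$. Hence $op_1$ is linearized before $op$. (If $op_1$ has already completed in $E$, this also follows directly from real-time order, but the read argument works uniformly.)

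\emph{Execution $E'$.} The read $op_2'$ runs solo from $Q_2'$. It is invoked after $op'$ responds, and it completes before $p_1$ takes the step leading to $C'$.

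The key step is to show that $op_2'$ returns $o_1$. I would argue this via indistinguishability with $E''$. In $E''$, the \textsc{Read}$(1)$ executed from $Q$ returned a value other than $o_2$, since the construction had not yet stopped. Also, $Q_2'$ differs from $Q$ only by invisible reads and a \textsc{Click}. By Click-obliviousness (Property~\ref{prop:oblivious_limit_click}) and read invisibility, the shared locations associated with $\mathcal{O}[1]$ are identical in $Q$ and $Q_2'$, and $p_{read}$'s local state can be taken as equal, for instance by having $p_{read}$'s fresh \textsc{Read} start from the same local state.

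However, the read may also touch non-component metadata that the \textsc{Click} modified. So rather than relying on indistinguishability, I would invoke linearizability directly. In any linearization of $E'$, the value returned by $op_2'$ is either $o_1$ (the update is linearized after it) or $o_2$ (the update is linearized before it).

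If $op_2'$ returns $o_2$, I would consider the truncation of $E'$ at $Q_3'$, extended instead by $p_1$ pausing forever while $p_{read}$ issues one more read. The prefix $E''$ up to $Q$ followed by solo reads and a \textsc{Click} then already shows $o_2$. Applying Lemma~\ref{lemma:invoke_independence_to_read}, the effective linearization step of $op_1'$ is still the later step $e$. By the definition of effective linearization step as the \emph{first} step after which a solo read returns $o_2$, no solo read from $Q_3'$ can return $o_2$. Since $op_2'$ is exactly such a solo read ending at $Q_3'$, this is a contradiction, and so $op_2'$ returns $o_1$.

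Given this, the point of $op_1'$ must come after that of $op_2'$. Real-time order forces $op'$ before $op_2'$, so $op'$ precedes $op_1'$.

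The main obstacle is justifying precisely that $op_2'$ returns $o_1$. The cleanest route is to apply Lemma~\ref{lemma:invoke_independence_to_read} to conclude that the step after $Q_3'$ is the effective linearization step in $E'$. That step is the first after which a solo read returns $o_2$, so the solo read ending at $Q_3'$, which was issued from before that step, returns the previous state $o_1$.
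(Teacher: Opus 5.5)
Your proof is correct and follows the paper's argument. In $E$, the read $op_2$ issued right after the effective linearization step returns $o_2$ and precedes the \textsc{Click} in real time. In $E'$, Lemma~\ref{lemma:invoke_independence_to_read} shows that $p_1$'s step from $Q_3'$ is still the effective linearization step, so the solo read $op_2'$ (issued at $Q_2'$, after the \textsc{Click}) cannot return $o_2$ and must be linearized before the update; your final ``cleanest route'' is exactly the paper's proof. The intermediate detours are unnecessary and can be dropped: the indistinguishability with $E''$ you abandoned, and the variant in which $p_1$ pauses forever, which does not fit Lemma~\ref{lemma:invoke_independence_to_read} because that lemma requires $p_1$ to take $e_t$ afterwards.
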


\begin{proof}
In execution $E$, process $p_{read}$ executes a \textsc{Read}$(1)$ (denoted $op_2$) operation from configuration $Q_0$, when the effective linearization step of $op_1$ has already occurred. Therefore, $op_2$ must return the updated state, $o_2$, and the linearization point of $op_2$ occurs after the linearization point of $op_1$. Since $op_2$ and $op$ are not concurrent and $op_2$ precedes $op$ in the execution $E$, then the linearization point of $op$ occurs after the linearization point of $op_2$. By transitivity, it follows that the linearization point of $op$ occurs after the linearization point of $op_1$.

In execution $E'$, operation $op'$ strictly precedes the \textsc{Read}$(1)$ operation that follows it (from configuration $Q_2'$, the operation denoted by $op_2'$). Thus, the linearization point of $op'$ occurs before the linearization point of $op_2'$.

Execution $E'$ is obtained from $E''$ by removing the step that constitutes the effective linearization step, and then extending the execution with steps of \textsc{Read} and \textsc{Click} operations, such that no process executes steps from both \textsc{Click} and \textsc{Read} operations, before allowing $p_1$ to take its next step. This construction exactly matches the conditions in  Lemma~\ref{lemma:invoke_independence_to_read}. Hence, by Lemma~\ref{lemma:invoke_independence_to_read}, the step taken by $p_1$ from configuration $Q_3'$ to $C'$ is the effective linearization step of the \textsc{Invoke} operation in execution $E'$.

By the definition of an effective linearization step, every \textsc{Read}$(1)$ operation that completes before this step, including $op_2'$, returns the state preceding the \textsc{Invoke}. Therefore, the linearization point of $op_2'$ must occur before the linearization point of $op_1'$. By transitivity, the linearization point of $op_1'$ occurs after the linearization point of $op'$.

\end{proof}

By Lemma~\ref{lemma:order_of_linearization}, the snapshot contains different states of $\mathcal{O}[1]$ in the two executions. A subsequent \textsc{Observe} operation must reveal this difference.

By Lemma~\ref{lemma:oblivious_conf_similar}, $C\isim{p_{click}}C'$ and the two configurations have identical shared-memory states. Lemma~\ref{lemma:indistinguishable_configurations} therefore implies that any sequence of events $\sigma$ consisting solely of steps by process $p_{click}$ that can occur starting at $C$ can also occur starting at $C'$. Moreover, if $\sigma$ is finite, then $C\sigma \isim{p_{click}} C'\sigma$.

Let $\sigma$ be the sequence of events in which $p_{{click}}$ executes \textsc{Observe}$(1)$ in isolation starting at configuration~$C$. Since $\mathcal{A}$ is obstruction-free (Property~\ref{prop:oblivious_obstruction_free} of Definition~\ref{def:properties_oblivious_click}), the execution of $\sigma$ starting at $C$ must terminate and return a value which, by Lemma~\ref{lemma:order_of_linearization}, must be $o_2$.

The execution from $C'$ must also terminate and, by Lemma~\ref{lemma:order_of_linearization}, return $o_1$. However, responses are part of the local state of $p_{click}$, and Lemma~\ref{lemma:indistinguishable_configurations} gives $C\sigma \isim{p_{click}}C'\sigma$. Thus, the two \textsc{Observe} operations cannot return different values, a contradiction.

Therefore, our initial assumption (by way of contradiction) is false, and there is no linearizable adaptive snapshot with oblivious click algorithm that can employ invisible \textsc{Read} operations. This concludes the proof of Theorem~\ref{th:main_oblivious}.


\section{Adaptive snapshot with a $k$-bounded click cannot employ invisible reads} \label{sec:second_impossible}

As in the previous section, we begin with an informal discussion. Let $p$ be a process that executes a \textsc{Click}. We use the \emph{k-bounded click} property (Property~\ref{prop:limit_memory} of Definition~\ref{def:properties_non_intrusive}) to deduce that in the presence of $k+1$ concurrent \textsc{Invoke} operations, each on a different component, the state of at least one of these components cannot be read by $p$ during the execution of the \textsc{Click} operation. Consequently, the \textsc{Click} operation is oblivious to whether the update to such a component is linearized before or after the \textsc{Click}. Thus, the two possible orders are indistinguishable to $p$.

As in the previous section, we construct two executions with different relative orders between a \textsc{Click} and a concurrent update, while the processes responsible for determining this order have indistinguishable views. Invisible \textsc{Read} operations fix the relative order but cannot report it to the concurrent operations because they do not modify shared memory. Therefore, one of these two executions must yield a non-linearizable execution. 
Otherwise, subsequent \textsc{Observe} operations must return different values in the two executions, leading to a contradiction. We now proceed to formalize the construction.

\subsection{Construction of the two indistinguishable executions } \label{subsec:second_construction}

Let $k=\min\{ m-1, n-3 \}$. Assume, by way of contradiction, that there exists a linearizable adaptive snapshot with a $k$-bounded click algorithm~$\mathcal{A}$ (see Definition~\ref{def:properties_non_intrusive}) that maintains a collection $\mathcal{O}$ of $m$ linearizable components, and whose \textsc{Read} operations are invisible. We construct two executions, $E$ and $E'$, with different linearization orders whose final configurations are indistinguishable to one process. This yields a contradiction and proves that no such algorithm $\mathcal{A}$ exists.

Recall that $m$ denotes the number of components in the snapshot object and $n$ denotes the number of processes in the system, labeled $p_1,\ldots,p_n$. We know that $\mathcal{A}$ is linearizable, and that all operation of $\mathcal{A}$ and all operation of the components are obstruction-free (Property~\ref{prop:obstruction_free} of Definition~\ref{def:properties_non_intrusive}). The executions are constructed so that, in each execution, there is at most one updating operation on each component. Thus, $E$ and $E'$ are component-serialized, and every completed updating operation in them has an effective linearization step (see Observation~\ref{obs:elp_exist}).

Before constructing the executions, we first state a simple lemma. The lemma states that if an adaptive snapshot with a $k$-bounded click algorithm supports invisible \textsc{Read} operations, then update linearization is independent of clicks, reads, and updates to other components. The lemma is stated formally below.

\begin{lemma}[Update linearization independence from \textsc{Click}, \textsc{Read}, and \textsc{Invoke}]
\label{lemma:invoke_independence_to_read_k_bound}

Let $\mathcal{A}$ be a linearizable adaptive snapshot with a $k$-bounded click algorithm that implements invisible \textsc{Read} operations, and $E$ be a \execType execution of $\mathcal{A}$ with history
$e_1, e_2, \ldots$ (finite or infinite). Let $p$ be a process that executes the effective linearization step of an updating operation \textsc{Invoke}$(i,op,arg)$. Let $e_t$ denote the step that is the effective linearization step of this operation in $E$. Consider a \execType execution $E'$, whose first $t-1$ events are identical to the first $t-1$ events of $E$, followed by an arbitrary sequence of steps, none of which are executed by $p$, consisting solely of \textsc{Click} and \textsc{Read} operations, or \textsc{Invoke}$(j,op_2,arg_2)$ operations with $j \neq i$ (either continuing previously invoked operations or newly invoked ones), such that no process executes steps of both \textsc{Read} and \textsc{Click}, or both \textsc{Read} and \textsc{Invoke}, within that sequence. After these steps, process $p$ performs step $e_t$, followed by any valid suffix of execution steps. Then, in execution $E'$, the effective linearization step of \textsc{Invoke}$(i,op,arg)$ is the event $e_t$ performed by $p$, and the result of the operation (both the state modification and the returned value) is identical to its result in execution $E$.
\end{lemma}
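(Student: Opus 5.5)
The plan is to reduce Lemma~\ref{lemma:invoke_independence_to_read_k_bound} to Definition~\ref{def:non_intrusive_invoke_independence}. I will delete the \textsc{Read} steps from the inserted sequence and check that nothing observable to the remaining processes changes. This mirrors what Lemma~\ref{lemma:invoke_independence_to_read} does relative to Definition~\ref{def:invoke_independence}.

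Let $\tau$ be the inserted sequence of steps between $e_{t-1}$ and $e_t$ in $E'$. Let $\tau^-$ be $\tau$ with all steps of \textsc{Read} operations removed. Let $R$ be the set of processes executing \textsc{Read} steps in $\tau$.

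\textbf{First step.} Show that $\tau^-$ can occur from the configuration $D$ reached after $e_1,\dots,e_{t-1}$. Show also that, writing $D\tau$ and $D\tau^-$ for the resulting configurations, the two have identical shared memory and $D\tau \isim{P\setminus R} D\tau^-$, where $P$ is the set of all processes.

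I argue this by induction over the steps of $\tau$. Read steps write nothing to shared memory, since reads are invisible (Definition~\ref{def:invisible_read}), and they alter only the local states of processes in $R$. By hypothesis, processes in $R$ take no \textsc{Click} or \textsc{Invoke} steps in $\tau$, so they take no steps of $\tau^-$ at all. Hence the processes stepping in $\tau^-$ have the same local states and see the same memory values in both runs. Lemma~\ref{lemma:indistinguishable_configurations}, applied step by step, then gives the claim.

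\textbf{Second step.} Build the comparison execution $E^* = e_1\cdots e_{t-1}\,\tau^-\,e_t\,\rho$. Here $\rho$ is a suitable suffix, for instance $p$ running solo until its \textsc{Invoke} completes, followed by whatever is needed. This run is component-serialized: $\tau^-$ contains only \textsc{Click} steps and \textsc{Invoke} steps on components $j\neq i$, and $E'$ itself is component-serialized.

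By Definition~\ref{def:non_intrusive_invoke_independence}, $e_t$ is the effective linearization step of \textsc{Invoke}$(i,op,arg)$ in $E^*$, and its result coincides with that in $E$. Because $p\notin R$ and memory is identical, $p$ performs the same step $e_t$ from $D\tau$ as from $D\tau^-$, with the same memory effect. Consequently $D\tau e_t$ and $D\tau^- e_t$ agree on memory and on all processes outside $R$.

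\textbf{Third step, the main obstacle.} Transfer the effective-linearization property back to $E'$. By Definition~\ref{def:effective_linearization} this property concerns isolated \textsc{Read}$(i)$ operations run after each prefix. Such a solo \textsc{Read} is run by some process whose local state I should take to be the same in both runs. It depends only on shared memory and that process's local state.

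Prefixes up to $e_{t-1}$ are common to both executions. After $\tau$ and after $e_t$, memory is identical, so a fresh solo \textsc{Read}$(i)$ returns the same value in $E'$ and in $E^*$. That value is \texttt{state$_2$} after $e_t$. For prefixes ending inside $\tau$, the memory of $D$ followed by a prefix of $\tau$ equals the memory of $D$ followed by the corresponding prefix of $\tau^-$. Since $e_t$ is the first effective step in $E^*$, no earlier prefix yields \texttt{state$_2$}, and the same holds in $E'$. Thus $e_t$ is the first such step in $E'$.

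Finally, the return value and state modification of the \textsc{Invoke} are determined by $p$'s subsequent local computation. That computation is identical given equal memory and equal local state of $p$, for any common suffix, which is valid in both runs by Lemma~\ref{lemma:indistinguishable_configurations}. This yields the claim.

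The delicate point is the bookkeeping of processes in $R$. I must ensure no suffix step of a non-$R$ process depends on $R$'s local state. This holds because local states are private, so the restriction that no process mixes \textsc{Read} with other operation types is exactly what is needed.
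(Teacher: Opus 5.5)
Your argument for the effective-linearization clause is the paper's. You delete the \textsc{Read} steps and use Lemma~\ref{lemma:indistinguishable_configurations} inductively (the paper's subclaim) to show that $\tau^-$ can run from $D$ with identical shared memory and identical local states outside $R$. You then apply Definition~\ref{def:non_intrusive_invoke_independence} to the \textsc{Read}-free run and transfer the property back using an idle reader and memory equality. Working directly with the single run $E^*$, whose prefixes already cover every prefix of $\tau^-$, is a slightly tidier packaging of the paper's proof by contradiction. Choosing $\rho$ so that $p$'s \textsc{Invoke} completes also makes explicit something the paper leaves implicit.

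The step that fails as written is your final paragraph, on the returned value. A common suffix need not be executable from both $D\tau e_t$ and $D\tau^- e_t$, because these configurations agree only outside $R$, while the lemma quantifies over arbitrary suffixes. A process $r\in R$ may complete a \textsc{Read} it began in $\tau$, which it never began in $E^*$. It may also go on to \textsc{Click} or \textsc{Invoke} steps whose behavior depends on its local state, since the no-mixing restriction constrains only $\tau$. Privacy of local states does not rescue this: $r$ can write information derived from its local state to shared memory after $e_t$, where $p$ may read it. So Lemma~\ref{lemma:indistinguishable_configurations} does not give you the returned value.

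Obtain that clause from linearizability instead. $E'$ is component-serialized and $\tau$ contains no \textsc{Invoke} on component~$i$. Hence the \textsc{Invoke} operations on component~$i$ that precede $p$'s are exactly the completed ones in the common prefix $e_1,\ldots,e_{t-1}$. As in Observation~\ref{obs:updating_op_globabl}, the pre-state of component~$i$ is therefore \texttt{state$_1$}, as in $E$, and the sequential specification fixes both the new state \texttt{state$_2$} and the return value. The paper's proof likewise argues only the effective-linearization clause. That clause depends only on the prefix through $e_t$, so the rest of your argument is unaffected.
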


\begin{proof}
The proof appears in  Appendix~\ref{proof:invoke_independence_to_read_k_bound}.
\end{proof}

We construct the executions using $k+1$ processes, $p_1, \ldots, p_{k+1}$, each performing a single updating 
\textsc{Invoke}$(i,op,arg)$ operation. $p_{read}$ will execute \textsc{Read} operations in isolation, and $p_{click}$ will execute a \textsc{Click} operation in isolation at a specified point in the construction. Consequently,
if a \textsc{Click} operation is invoked before the linearization point of an
\textsc{Invoke}$(i,op,arg)$ operation, then a subsequent \textsc{Observe}$(i)$ operation must return the initial state of $\mathcal{O}[i]$. In contrast, if a \textsc{Click} operation is invoked after the linearization point of this
\textsc{Invoke} operation, then a subsequent \textsc{Observe}$(i)$ operation must return a different state of $\mathcal{O}[i]$.

We consider a legal initial configuration in which every component $\mathcal{O}[i]$, for $i\in[m]$, is in the same state $o_1$. It suffices to establish the impossibility from such a configuration. Let $op$ be an update operation and let $arg$ be a parameter such that applying $op$ with parameter $arg$ to a component in state $o_1$ changes its state. Since all components are instances of the same object type, this update has the same effect on every component in state $o_1$. We denote the resulting state by $o_2$, where $o_2\neq o_1$.

We now define, for each $i\in[k+1]$, an execution scheme denoted by $E''(i)$. The notation $E''(i)$ does not refer to a single execution from a fixed initial configuration. Rather, the scheme can be instantiated from any configuration $C$, in which case it specifies an execution segment that can occur from $C$. The executions $E$ and $E'$ constructed later will use different instantiations of this scheme. An instantiation of $E''(i)$ proceeds as follows.

\begin{enumerate}
\item Processes $p_{read}$ and $p_i$ alternate in the execution in the following manner. 
\begin{enumerate}
\item \label{step:read_i_k_bounded} Process $p_{read}$ executes a \textsc{Read}$(i)$ operation to completion. If the returned state is $o_2$, the construction of $E''(i)$ is completed.
\item \label{step:next_event_i_k_bounded} Process $p_i$ performs one step of its \textsc{Invoke}$(i,op,arg)$ operation. Once the operation completes, $p_i$ takes no further steps.
\end{enumerate}
\end{enumerate}

We first observe that, for every $1 \leq i \leq k+1$, the execution scheme $E''(i)$ is obtained from the construction of $E''$ in Subsection~\ref{subsec:first_construction} by replacing $p_1$ and $\mathcal{O}[1]$ with $p_i$ and $\mathcal{O}[i]$, respectively. The proof of Lemma~\ref{lemma:execution_valid} applies verbatim after these substitutions. Consequently, when instantiated from any configuration, $E''(i)$ specifies a valid finite execution segment.

Having established that, for every $1 \leq i \leq k+1$, the scheme $E''(i)$ can be instantiated at any configuration to yield a valid finite execution segment, we now construct the execution $E''$. Our goal is to ensure that, in the final configuration of $E''$, the next step of each process $p_1,\ldots,p_{k+1}$ is the effective linearization step of the corresponding \textsc{Invoke} operation. We construct $E''$ inductively by successively instantiating the schemes $E''(1),\ldots,E''(k+1)$.

Formally, we construct $E''$ beginning from the initial configuration
$C_0$. For every $1 \le i \le k+1$, we extend the execution from
configuration $C_{i-1}$ to $C_i$ by a sequence of events $\sigma_i$
such that:
(i) $\sigma_i$ is $\{p_i,p_{read}\}$-only, and
(ii) in configuration $C_i$, the next step of process $p_i$ is the
effective linearization step of an
\textsc{Invoke}$(i,op,arg)$ operation.

\paragraph*{Base case ($i=1$).}
Consider an instantiation of the scheme $E''(1)$. Since we have shown that this instantiation terminates, there exists an iteration $j \ge 2$ such that the
\textsc{Read}$(1)$ operation in
Stage~\ref{step:read_i_k_bounded} (see the definition of execution scheme $E''(i)$ above) returns the updated value $o_2$,
while the \textsc{Read}$(1)$ in the preceding iteration returns the initial state $o_1$.

Note that $j \neq 1$, because prior to the first
\textsc{Read}$(1)$ operation, the
\textsc{Invoke}$(1,op,arg)$ operation has not yet begun.
Hence, the modification to $\mathcal{O}[1]$ cannot occur before
Stage~\ref{step:next_event_i_k_bounded} in iteration~1.

We therefore define $C_1$ to be the configuration immediately after
Stage~\ref{step:read_i_k_bounded} in iteration $j-1$.
The extension from $C_0$ to $C_1$ is $\{p_1,p_{read}\}$-only, since
these are the only processes that take steps in $E''(1)$.
Moreover, because $p_{read}$ performs only invisible
\textsc{Read} operations, the effective linearization step of
\textsc{Invoke}$(1,op,arg)$ is executed by $p_1$, and it occurs in
Stage~\ref{step:next_event_i_k_bounded} of iteration~$j$.
Thus, in configuration $C_1$, the next step of $p_1$ is precisely
its effective linearization step.

\paragraph*{Induction step.}
Let $1 \le i \le k$, and assume that we have constructed $E''$
up to configuration $C_i$ that satisfies the induction hypothesis.
From $C_i$, we extend the execution by instantiating $E''(i+1)$.

As in the base case, let $j$ denote the iteration in which
the instantiation of $E''(i+1)$ terminates, and define $C_{i+1}$ to be the configuration
immediately after Stage~\ref{step:read_i_k_bounded} in iteration
$j-1$. By the same reasoning as before, the extension from $C_i$
to $C_{i+1}$ is $\{p_{i+1},p_{read}\}$-only, since these are the only
processes that take steps in $E''(i+1)$.
Furthermore, from configuration $C_{i+1}$, the next step of
$p_{i+1}$ is the effective linearization step of
\textsc{Invoke}$(i+1,op,arg)$.

This completes the induction and the construction of $E''$.

\paragraph*{Construction of $E$ from $E''$.}
Having constructed $E''$, we now define execution $E$.
Let $C_{k+1}$ denote the final configuration of $E''$.
From $C_{k+1}$, each process $p_1,\ldots,p_{k+1}$ performs exactly one step. Next, process $p_{read}$ performs a
\textsc{Read}$(i)$ operation to completion for every
$1 \le i \le k+1$.
Finally, process $p_{click}$ performs a
\textsc{Click} operation to completion.
The resulting configuration marks the end of execution $E$.

\paragraph*{Construction of $E'$ from $E$.}
Having constructed $E$, we now define execution $E'$.
By Property~\ref{prop:limit_memory} of
Definition~\ref{def:properties_non_intrusive}, a
\textsc{Click} operation in $\mathcal{A}$ accesses no more than $k$
components. Hence, there exists an index
$j \in \{1,\ldots,k+1\}$ such that the component
$\mathcal{O}[j]$ is not accessed during the
\textsc{Click} operation in $E$. We construct $E'$ as follows.
Let $C_{k+1}$ be the final configuration of $E''$.
From $C_{k+1}$, every process $p_1,\ldots,p_{k+1}$,
except $p_j$, performs exactly one step. Next, process $p_{read}$ performs a
\textsc{Read}$(i)$ operation to completion for every
$1 \le i \le k+1$. Then process $p_{click}$ performs a
\textsc{Click} operation to completion. After that, $p_{read}$ again performs a
\textsc{Read}$(i)$ operation to completion for every
$1 \le i \le k+1$. Finally, process $p_j$ performs one step.
The resulting configuration marks the end of execution $E'$.

the two extensions defining $E$ and $E'$ add only a finite number of operations, each of which
executes from invocation to completion without concurrent
interference. By obstruction-freedom, both extensions are therefore
well-defined and must terminate.

Moreover, for each $1\leq i\leq k$, each execution contains exactly one \textsc{Invoke} operation that modifies component~$\mathcal{O}[i]$. Hence, each of these operations is updating. We now introduce notation for the configurations reached in the two
executions.

\paragraph*{Configuration notation.}
Both executions are extended from the same configuration $C_{k+1}$.
For simplicity, we denote $C_{k+1}$ by $Q_0$. For every $1 \le i < j$, let $Q_i$ denote the configuration
reached in both $E$ and $E'$ after process $p_i$ performs its
additional single step starting at $Q_0$.
(These prefixes of $E$ and $E'$ are identical up to this point.) In execution $E$, let $Q_j$ denote the configuration reached
after $p_j$ performs its additional single step. For every $j < i \le k+1$, let $Q_i$ and $Q_i'$ denote the
configurations reached in $E$ and $E'$, respectively,
after $p_i$ performs its additional single step.
Observe that in $E'$, process $p_j$ does not take its step at this
stage; thus, in $E'$, from configuration $Q_{j-1}$ the execution
proceeds directly to $Q_{j+1}'$.

Next, let $D_1$ and $D_1'$ denote the configurations reached in
$E$ and $E'$, respectively, after process $p_{read}$ completes
\textsc{Read}$(i)$ for every $1 \le i \le k+1$, prior to the
\textsc{Click} operation. Let $C$ and $D_2'$ denote the configurations reached after
$p_{click}$ completes a \textsc{Click} operation starting at
$D_1$ and $D_1'$, respectively. Let $D_3'$ denote the configuration reached after
$p_{read}$ completes \textsc{Read}$(i)$ for every
$1 \le i \le k+1$, starting at $D_2'$. Finally, let $C'$ denote the configuration reached after
process $p_j$ performs its single step from $D_3'$.

We denote the \textsc{Invoke}$(j,op,arg)$ operations in executions
$E$ and $E'$ by $op_1$ and $op_1'$, respectively.
Let $op_2$ denote the \textsc{Read}$(j)$ operation executed from
configuration $Q_k$ in $E$, and let $op_2'$ denote the
\textsc{Read}$(j)$ operation executed from configuration $D_2'$ in
$E'$.
Finally, we denote the \textsc{Click} operations in $E$ and $E'$ by
$op$ and $op'$, respectively.

We now claim that configurations $C$ and $C'$ are
indistinguishable to process $p_{click}$, and that the shared memory
state is identical in both configurations.
This claim is formalized and proved in the lemma below.

\begin{lemma} \label{lemma:k_bounded_conf_similar}
Let $C$ and $C'$ be the configurations defined above. Then
$C \isim{p_{click}} C'$, and the shared-memory state in both configurations
is identical.
\end{lemma}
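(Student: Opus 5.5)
The plan mirrors the proof of Lemma~\ref{lemma:oblivious_conf_similar}: track, configuration by configuration along $E$ and $E'$, that $p_{click}$'s local state agrees and that the shared memory differs only in locations associated with $\mathcal{O}[j]$, until $p_j$'s deferred step closes the gap.

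\textbf{Step 1: the prefix of single steps.} Both executions agree up to $Q_{j-1}$. In $E$, $p_j$ then takes its step from $Q_{j-1}$. This step is the effective linearization step of \textsc{Invoke}$(j,op,arg)$, since $p_j$ is poised on it at $C_{k+1}$ and the intervening steps of $p_1,\dots,p_{j-1}$ are \textsc{Invoke} steps on other components. Justifying this uses Lemma~\ref{lemma:invoke_independence_to_read_k_bound}, applied with respect to the component-serialized execution $E''$ extended by $p_j$'s step. By component encapsulation (Property~\ref{prop:act_on_memory}), this step modifies only memory associated with $\mathcal{O}[j]$, and $p_{click}$ has not yet moved. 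Hence $Q_j \isim{p_{click}} Q_{j-1}$, and the two memories differ only on $\mathcal{O}[j]$.

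For $i>j$, the step of $p_i$ is also its effective linearization step, again by Lemma~\ref{lemma:invoke_independence_to_read_k_bound}. By encapsulation it touches only locations of $\mathcal{O}[i]$, which are disjoint from those of $\mathcal{O}[j]$. It therefore reads the same values from $Q_i$-predecessors in both executions and writes the same values. By induction on $i$, using Lemma~\ref{lemma:indistinguishable_configurations} for $p_i$'s step, $Q_i$ and $Q_i'$ agree everywhere except on $\mathcal{O}[j]$, and $p_{click}$'s state is unchanged.

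\textbf{Step 2: reads and the click.} The block of invisible \textsc{Read}s leads to $D_1$ and $D_1'$. It changes no shared memory and does not involve $p_{click}$, so the invariant persists. $p_{click}$ then runs \textsc{Click} from $D_1$ and $D_1'$. In $E$ it accesses no location of $\mathcal{O}[j]$, by the choice of $j$. So every location it accesses has identical values in $D_1$ and $D_1'$, and Lemma~\ref{lemma:indistinguishable_configurations} shows the same event sequence can occur from $D_1'$. That sequence produces identical writes and yields $C \isim{p_{click}} D_2'$, with memories still differing only on $\mathcal{O}[j]$. The second block of reads preserves this, giving $D_3'$.

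\textbf{Step 3: the deferred step.} Finally, $p_j$ steps from $D_3'$. $E'$ is obtained from the prefix ending at $Q_{j-1}$ (in which $p_j$'s next step is its effective linearization step $e$) by inserting only steps of other \textsc{Invoke}s on components $\neq j$, \textsc{Read}s, and one \textsc{Click}, by distinct processes, with no process mixing types. Lemma~\ref{lemma:invoke_independence_to_read_k_bound} therefore applies: $p_j$'s step is $e$, its effective linearization step, and it has the same effect. By encapsulation it reads and writes only $\mathcal{O}[j]$ locations. Those locations have the same values in $D_3'$ as in $Q_{j-1}$, because nothing else wrote them: the other invokes are encapsulated to their own components, reads are invisible, and the click did not touch $\mathcal{O}[j]$. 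Hence $e$ writes exactly what it wrote in $E$. Since $C$ and $D_3'$ differed precisely in $e$'s effect on $\mathcal{O}[j]$, the resulting $C'$ has the same memory as $C$, and $p_{click}$'s state is untouched.

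\textbf{Main obstacle.} The delicate point is Step 1 and the bookkeeping in Step 3. One must ensure that the \textsc{Invoke} steps of $p_i$ for $i\neq j$ are each effective linearization steps, so that encapsulation confines them to $\mathcal{O}[i]$. One must also ensure that in neither execution does any other step write to the locations of $\mathcal{O}[j]$, so that the difference remains exactly the effect of $e$. I would handle both by applying Lemma~\ref{lemma:invoke_independence_to_read_k_bound} to each $p_i$ separately, relative to $E''$ truncated at its poised step, and by invoking the first clause of encapsulation (a component is modified only via its own \textsc{Invoke}) for the $\mathcal{O}[j]$ locations.
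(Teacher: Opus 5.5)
Your proposal is correct and follows the paper's proof essentially step for step. It uses the same chain of configurations: $Q_{j-1}$ vs.\ $Q_j$, then $Q_{k+1}$ vs.\ $Q'_{k+1}$, $D_1$ vs.\ $D_1'$, $C$ vs.\ $D_2'$ and $D_3'$, closing with $p_j$'s deferred step via Lemma~\ref{lemma:invoke_independence_to_read_k_bound} and component encapsulation. Your explicit check in Step 3 that the $\mathcal{O}[j]$ locations are unchanged between $Q_{j-1}$ and $D_3'$, so that Lemma~\ref{lemma:indistinguishable_configurations} forces the deferred step to write exactly what $e$ wrote, sharpens a point the paper only asserts. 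One correction: in Step 1 the base execution for Lemma~\ref{lemma:invoke_independence_to_read_k_bound} must be $E''$ truncated at $C_j$ followed by $e$, as your last paragraph says, not $E''$ extended by $e$, since there $e$ being the effective linearization step is exactly what is to be shown.
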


\begin{proof}
Informally, under our assumptions, the locations accessed by the \textsc{Click} operation and by the effective linearization steps of the other updates are disjoint from those accessed by the effective linearization step of \textsc{Invoke}$(j,op,arg)$. This yields indistinguishability. Their shared-memory modifications are therefore independent, and their order does not affect the final shared-memory state. The formal proof appears in Appendix~\ref{proof:k_bounded_conf_similar}.
\end{proof}

We next establish the opposite relative order of the linearization points of $op$ and $op_1$ in $E$, and of $op'$ and $op_1'$ in $E'$.

\begin{lemma} \label{lemma:k_bounded_order_of_linearization}
In execution $E$, the linearization point of $op_1$ (the \textsc{Invoke}) occurs before the linearization point of $op$ (the \textsc{Click}), whereas in execution $E'$, the linearization point of $op_1'$ (the \textsc{Invoke}) occurs after the linearization point of $op'$ (the \textsc{Click}).
\end{lemma}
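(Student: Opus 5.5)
We mirror the proof of Lemma~\ref{lemma:order_of_linearization}, comparing $op_1$ and $op$ via the intermediate \textsc{Read}$(j)$ operations $op_2$ and $op_2'$, and using real-time order plus the effective linearization step definition.

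\emph{Execution $E$.} The first task is to show that $op_2$ returns $o_2$. In $E$, every process $p_1,\ldots,p_{k+1}$ takes its additional step from $C_{k+1}$. For $p_j$ this step is preceded by steps of $p_1,\ldots,p_{j-1}$ only, and each of those is an \textsc{Invoke} step on a component other than $j$. From $C_{j-1}$ (where the next step of $p_j$ was its effective linearization step), the execution continued with $\{p_i,p_{read}\}$-only segments for $i>j$. These consist of \textsc{Invoke}$(i,\cdot)$ steps with $i\ne j$ and invisible \textsc{Read}s, with no process mixing reads and invokes.

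By Lemma~\ref{lemma:invoke_independence_to_read_k_bound}, the step $p_j$ takes is therefore still the effective linearization step of $op_1$, with the same result. This application is the delicate bookkeeping point: we must check that $E$ is \execType (it is, since there is one \textsc{Invoke} per component) and that the interleaved segment matches the lemma's hypotheses.

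Next, the steps of $p_{j+1},\ldots,p_{k+1}$ and the reads $\textsc{Read}(i)$ for $i<j$ separate this step from $op_2$. By the same lemma, applied in the form "later foreign steps do not undo the effect," an isolated \textsc{Read}$(j)$ returns $o_2$. More precisely, we argue that component $j$'s memory is touched only by \textsc{Invoke}$(j)$ (component encapsulation), and $p_j$ takes no further steps. Hence the isolated-read view after those foreign steps agrees with the one right after the effective step, via Lemma~\ref{lemma:indistinguishable_configurations} applied to $p_{read}$. This argument requires that the reads access only component $j$ locations or unchanged non-component locations. This is the main obstacle, and we expect to handle it by applying Lemma~\ref{lemma:invoke_independence_to_read_k_bound} to an auxiliary execution that reorders the foreign steps before $e_t$.

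Given that $op_2$ returns $o_2$, linearizability forces the linearization point of $op_2$ to lie after that of $op_1$. Since $op_2$ completes before $op$ begins, the linearization point of $op$ follows that of $op_2$, hence follows that of $op_1$.

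\emph{Execution $E'$.} The click $op'$ completes before $op_2'$ starts, so the linearization point of $op'$ precedes that of $op_2'$. Now $E'$ agrees with $E$ up to $C_{j-1}$'s successor prefix, followed by steps of other invokes on components $\ne j$, invisible reads, and a \textsc{Click}, with no process mixing types, before $p_j$ moves. By Lemma~\ref{lemma:invoke_independence_to_read_k_bound}, the final step of $p_j$ is the effective linearization step of $op_1'$. By Definition~\ref{def:effective_linearization} and minimality, $op_2'$, which completes earlier, cannot return $o_2$; it must return $o_1$. Hence $op_2'$ is linearized before $op_1'$, and by transitivity $op'$ precedes $op_1'$.
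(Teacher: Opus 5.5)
Your proposal follows the paper's proof essentially step for step: real-time order between the \textsc{Click} and the intermediate \textsc{Read}$(j)$, Lemma~\ref{lemma:invoke_independence_to_read_k_bound} to show that $p_j$'s single step remains the effective linearization step in both $E$ and $E'$, and transitivity. Two small fixes: the configuration where $p_j$'s next step is its effective linearization step is $C_j$, not $C_{j-1}$. Also, the paper simply asserts that $op_2$ still returns $o_2$ after the intervening steps of $p_{j+1},\ldots,p_{k+1}$, and your reordering plan is a sound way to justify this. Note, though, that the lemma only covers foreign steps placed \emph{before} $e$, not ``later'' ones; component encapsulation then lets those steps commute with $e$, as in the proof of Lemma~\ref{lemma:k_bounded_conf_similar}.
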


\begin{proof}
We first analyze execution $E$.

In $E$, operation $op_2$ strictly precedes the \textsc{Click}
operation $op$. Hence, the linearization point of $op$ occurs
after the linearization point of $op_2$.

Recall that $E$ is obtained by extending $E''$.
In the $j$-th stage of the construction of $E''$, the execution
reaches a configuration $C_j$ such that the next step of process
$p_j$ is the effective linearization step of the
\textsc{Invoke}$(j,op,arg)$ operation.

In execution $E$, from configuration $C_j$ until configuration
$Q_{j-1}$, every step taken is either:
(i) a step of a \textsc{Read} operation, or
(ii) a step of an \textsc{Invoke}$(i,op_2,arg_2)$ operation for
$i \neq j$.
Moreover, only process $p_{read}$ performs steps belonging to
\textsc{Read} operations, and it performs no
\textsc{Invoke} or \textsc{Click} steps. Additionally, there is at most one concurrent \textsc{Invoke} operation per component in $E$.
Thus, the conditions of
Lemma~\ref{lemma:invoke_independence_to_read_k_bound}
are satisfied.

It follows that the next step of $p_j$ from configuration
$Q_{j-1}$ is still the effective linearization step of
\textsc{Invoke}$(j,op,arg)$.
By construction, when $p_j$ takes this step, the execution
reaches configuration $Q_j$.

Since there are no other updating operations applied to
$\mathcal{O}[j]$, every \textsc{Read}$(j)$ operation that
completes after this step, in particular $op_2$, which completes
after configuration $Q_{k+1}$, returns the updated value.
Therefore, the linearization point of $op_2$ must occur after the
linearization point of $op_1$.
By transitivity, the linearization point of $op_1$
occurs before the linearization point of $op$.

\medskip
We now analyze execution $E'$.

In $E'$, the \textsc{Click} operation $op'$ strictly precedes
the \textsc{Read}$(j)$ operation $op_2'$.
Hence, the linearization point of $op_2'$ occurs after the
linearization point of $op'$.

As in $E$, the construction of $E'$ extends $E''$.
Again, in the $j$-th stage of constructing $E''$, the execution
reaches configuration $C_j$, where the next step of $p_j$ is the
effective linearization step of
\textsc{Invoke}$(j,op,arg)$.

In execution $E'$, from configuration $C_j$ until configuration
$D_3'$, every step taken is either:
(i) a step of a \textsc{Read} operation,
(ii) a step of a \textsc{Click} operation, or
(iii) a step of an \textsc{Invoke}$(i,op_2,arg_2)$ for $i \neq j$.
As before, only $p_{read}$ performs \textsc{Read} steps,
and it performs no \textsc{Invoke} or \textsc{Click} steps. Additionally, $E'$ is a \execType execution. Thus, the conditions of
Lemma~\ref{lemma:invoke_independence_to_read_k_bound}
hold here as well.

Therefore, the next step of $p_j$ from configuration $D_3'$
is the effective linearization step of
\textsc{Invoke}$(j,op,arg)$.
By construction, when $p_j$ takes this step, the execution
reaches configuration $C'$.

Since no other updating operations are applied to
$\mathcal{O}[j]$, every \textsc{Read}$(j)$ operation that
completes before this step, in particular, $op_2'$,
which completes before configuration $D_3'$, returns the
value preceding the \textsc{Invoke}.
Hence, the linearization point of $op_2'$ must occur before
the linearization point of $op_1'$.
By transitivity, the linearization point of $op_1'$
occurs after the linearization point of $op'$.
\end{proof}

By Lemma~\ref{lemma:k_bounded_order_of_linearization}, the snapshot contains different states of $\mathcal{O}[j]$ in the two executions. A subsequent \textsc{Observe} operation must reveal this difference.

By Lemma~23, $C\isim{p_{click}}C'$ and the two configurations have identical shared-memory states. We may therefore repeat the argument from Section~\ref{sec:first_impossible}. Lemma~\ref{lemma:indistinguishable_configurations} implies that any sequence of events $\sigma$ consisting solely of steps by process $p_{click}$ that can occur starting at $C$ can also occur starting at $C'$. Moreover, if $\sigma$ is finite, then $C\sigma \isim{p_{click}} C'\sigma$.

Let $\sigma$ be the sequence of events in which $p_{click}$ executes \textsc{Observe}$(j)$ in isolation starting at configuration~$C$. Since $\mathcal{A}$ is obstruction-free (Property~\ref{prop:obstruction_free} of Definition~\ref{def:properties_non_intrusive}), the execution from $C$ terminates and, by Lemma~\ref{lemma:k_bounded_order_of_linearization}, returns $o_2$. By Lemmas~\ref{lemma:k_bounded_conf_similar} and~\ref{lemma:indistinguishable_configurations}, the same sequence can occur from $C'$. It must also terminate and, by Lemma~\ref{lemma:k_bounded_order_of_linearization}, return $o_1$. However, responses are part of the local state of $p_{click}$, and Lemma~\ref{lemma:indistinguishable_configurations} gives $C\sigma \isim{p_{click}}C'\sigma$. Thus, the operations cannot return different values, a contradiction.

This contradiction completes the proof that no linearizable adaptive snapshot with a $k$-bounded click algorithm can employ invisible \textsc{Read} operations. This concludes the proof of Theorem~\ref{th:main_non_intrusive}.

\section{Conclusion} \label{sec:conclusion}

In many practical workloads, read operations are significantly more frequent than update operations. It is therefore desirable for an adaptive snapshot algorithm to implement reads that are as efficient as possible, ideally invisible. However, despite this appeal, we show in this paper that, under natural assumptions satisfied by all existing adaptive snapshot implementations, such an implementation is impossible to design.
This result rules out a promising avenue for improving the performance of adaptive snapshots, which are among the most efficient snapshot constructions currently available. A natural question is how to overcome this barrier.

One possible direction is to design snapshot mechanisms tailored to higher-level data structures. In such settings, a “read” is typically a full data-structure operation rather than a simple memory access. Because these operations are already relatively expensive, the additional cooperation required to support snapshot scans may be less noticeable in practice.

Another direction is to design an adaptive snapshot that relaxes at least one of our assumptions and therefore may admit invisible reads. Constructing a practical algorithm of this kind remains an interesting open question.

\bibliography{ref_disc}

\appendix

\section{Examples of adaptive snapshots with oblivious click algorithms} \label{sec:examples_satisfy_props}

In this appendix, we consider two existing algorithms and show that they satisfy the definition of an adaptive snapshot with oblivious click algorithm (Definition~\ref{def:properties_oblivious_click}).

\subsection{Jayanti, Jayanti, and Jayanti~\cite{memsnap_podc_24}}

In the algorithm of Jayanti, Jayanti, and Jayanti~\cite{memsnap_podc_24}, the method used to modify components is denoted by \textsc{Update}$(i,op,arg)$. The state of the components is stored in an array named $A$, while the snapshot metadata is maintained in a variable named $X$ and an additional array named $B$.

Observe that \textsc{Click} operations only increment $X$. Furthermore, all operations are wait-free and hence obstruction-free. Hence, the algorithm satisfies Property~\ref{prop:oblivious_limit_click} and Property~\ref{prop:oblivious_obstruction_free}.

In an \textsc{Update}$(i,op,arg)$ operation, the component itself (i.e., $A[i]$) is modified only in Line~$7$, where $op(arg)$ is directly applied to $A[i]$. Therefore, the effective linearization step of the operation must occur at the unique line at which the component is modified, namely, Line~$7$. Any \textsc{Read}$(i)$ operation invoked after the execution of this line observes the updated value. Although such a \textsc{Read}$(i)$ operation invokes \textsc{Forward}$(i)$ before reading the new value, which can modify the snapshot metadata, it must nevertheless return the updated value read from $A[i]$. Hence, by definition, Line~$7$ constitutes the effective linearization step of the operation.

If a process halts before executing Line~7, any intervening \textsc{Click} steps can change only $X$. These changes do not affect the halted process's local state or its next step. When the process resumes, it executes Line~7, which remains the effective linearization step. Hence, the algorithm satisfies Property~\ref{prop:oblvious_click_linearize} and Property~\ref{prop:oblivious_act_on_memory}.

We therefore conclude that this algorithm is an adaptive snapshot with oblivious click algorithm.

\subsection{Bashari, Chan, and Woelfel~\cite{bashari_et_al:LIPIcs.DISC.2024.7}}

In the algorithm of Bashari, Chan, and Woelfel~\cite{bashari_et_al:LIPIcs.DISC.2024.7}, the state of the components is stored in an array denoted by $O$, while the snapshot metadata is maintained in a variable named $clk$ and in three arrays named $lastScan$, $lastUpdate$, and $R$.

Observe that \textsc{Click} operations only increment $clk$ and access or modify the $lastScan$ array. Moreover, all operations are wait-free and hence obstruction-free. Hence, the algorithm satisfies Property~\ref{prop:oblivious_limit_click} and Property~\ref{prop:oblivious_obstruction_free}.

In an \textsc{Invoke}$(i,op,arg)$ operation, the component itself (i.e., $O[i]$) is modified only in Line~$37$, where $op(arg)$ is directly applied to $O[i]$. Therefore, the effective linearization step of the operation must occur at the unique line at which the component is modified, namely, Line~$37$. Any \textsc{Read}$(i)$ operation invoked after the execution of this line observes the updated value. Although such a \textsc{Read}$(i)$ operation invokes \textsc{HelpUpdate}, which can modify the snapshot metadata, it must nevertheless return the updated value read from $O[i]$. Hence, by definition, Line~$37$ constitutes the effective linearization step of the operation.

If a process halts before executing Line~37, any intervening \textsc{Click} steps can change only $clk$ and $lastScan$. These changes do not affect the halted process's local state or its next step. When the process resumes, it executes Line~7, which remains the effective linearization step. Hence, the algorithm satisfies Property~\ref{prop:oblvious_click_linearize} and Property~\ref{prop:oblivious_act_on_memory}.

We therefore conclude that this algorithm is an adaptive snapshot with oblivious click algorithm.

\section{Detailed proofs} \label{sec:additional_proofs}

In this appendix, we present detailed proofs of results from earlier sections that were postponed due to space constraints. Before each proof, we restate the relevant claim.

\subsection{Proof of Observation~\ref{obs:updating_op_globabl}} \label{proof:updating_op_globabl}

Observation~\ref{obs:updating_op_globabl}. {\it Let $E$ be a linearizable \execType\ execution of an adaptive
snapshot algorithm $\mathcal{A}$, and let
$OP=\textsc{Invoke}(i,op,arg)$ be an operation in~$E$.
Suppose that $OP$ is an updating operation with respect to some
linearization of~$E$, and that, when the operations are executed
sequentially according to this linearization, $OP$ changes the
state of component~$i$ from \texttt{state$_1$} to
\texttt{state$_2$}. Then, in every linearization of~$E$,
$OP$ is an updating operation, and when the operations are executed
sequentially according to the linearization, $OP$ changes the state
of component~$i$ from \texttt{state$_1$} to
\texttt{state$_2$}. }

\begin{proof}
Since $E$ is a \execType\ execution (see
Definition~\ref{def:exec_type}), \textsc{Invoke} operations on the
same component are never concurrent. Hence, their relative
linearization order is uniquely determined by real-time order.

Furthermore, by Definition~\ref{def:adaptive_snapshot}, the state of
a component is determined solely by the \textsc{Invoke} operations
applied to that component. Therefore, the state of component~$i$
immediately before $OP$ in the sequential execution induced by any
linearization of~$E$ is the same. Since, by assumption, there exists
a linearization in which this state is \texttt{state$_1$} and applying
$OP$ changes it to \texttt{state$_2$}, the same pre-state and state
transition occur in every linearization of~$E$.

Consequently, in every linearization of~$E$, $OP$ is an updating
operation, and when the operations are executed sequentially
according to that linearization, $OP$ changes the state of
component~$i$ from \texttt{state$_1$} to \texttt{state$_2$}.
\end{proof}

\subsection{Proof of Observation~\ref{obs:elp_exist}} \label{proof:elp_exist}

Observation~\ref{obs:elp_exist}. {\it Let $\mathcal{A}$ be a linearizable adaptive snapshot algorithm. Assume that \textsc{Read} operations in $\mathcal{A}$ are obstruction-free. Let $E$ be a \execType execution of $\mathcal{A}$. Then, every completed updating operation in $E$ has a unique effective linearization step. }

\begin{proof}
Let $OP=\textsc{Invoke}(i,op,arg)$ be a completed updating operation in $E$ that changes the state of component~$i$ from \texttt{state$_1$} to \texttt{state$_2$}. Since in $E$ there is at most one concurrent \textsc{Invoke} operation to each component, the state of component~$i$ at the invocation of $OP$ must be \texttt{state$_1$}, and its state upon termination of $OP$ must be \texttt{state$_2$}.

Since \textsc{Read} operations in $\mathcal{A}$ are
obstruction-free, any \textsc{Read} operation that executes in
isolation must terminate and return a value. Consequently, a \textsc{Read}$(i)$ operation that is invoked and executed in isolation before the invocation of $OP$ must return \texttt{state$_1$}, whereas a \textsc{Read}$(i)$ operation invoked and executed in isolation after the termination of $OP$ must return \texttt{state$_2$}. Therefore, there exists a first step between the invocation and the termination of $OP$ after which an isolated \textsc{Read}$(i)$ operation returns \texttt{state$_2$}. By definition, this step is the effective linearization step of $OP$.

Uniqueness follows from the minimality of this step.
\end{proof}

\subsection{Proof of Lemma~\ref{lemma:invoke_independence_to_read}} \label{proof:invoke_independence_to_read}

Lemma~\ref{lemma:invoke_independence_to_read}. {\it Let $\mathcal{A}$ be an adaptive snapshot algorithm that satisfies update linearization independence from \textsc{Click}, and implements invisible \textsc{Read} operations. Let $E$ be a \execType execution of $\mathcal{A}$ with history
$e_1, e_2, \ldots$ (finite or infinite). Let $p$ be a process that executes an effective linearization step of an updating operation \textsc{Invoke}$(i,op,arg)$.
 Let $e_t$ denote the step that is the effective linearization step of this operation in $E$. Consider an execution $E'$ whose first $t-1$ events are identical to the first
$t-1$ events of $E$, followed by an arbitrary sequence of steps, not executed by $p$, consisting solely of \textsc{Click} and \textsc{Read} operations (either continuing
previously invoked operations or newly invoked ones), such that no process executes steps from both \textsc{Click} and \textsc{Read} operations in that sequence. After these steps, process $p$ performs step $e_t$, followed by any valid suffix of execution
steps. Then, in execution $E'$, the effective linearization step of
\textsc{Invoke}$(i,op,arg)$ is the event $e_t$, performed by $p$, and the
result of the operation (both the state modification and the returned value)
is identical to its result in execution $E$. }

\begin{proof}
Let $E,i,op,arg,e_t$, and $p$ be as in the lemma.
Assume, for the sake of contradiction, that the lemma does not hold.
Then, there exists an execution $\tilde{E}$ whose first $t-1$ events are
identical to those of $E$, followed by a sequence of events $\sigma$
starting at configuration $D$, consisting solely of steps of
\textsc{Click} or \textsc{Read} operations, such that no process executes
steps from both \textsc{Click} and \textsc{Read} operations within that
sequence, and after $\sigma$, process $p$ performs step $e_t$, yet this
step is \emph{not} the effective linearization step of
\textsc{Invoke}$(i,op,arg)$.

Since \textsc{Read} operations are invisible (i.e., they do not modify
the shared memory), they affect only the local state of the invoking
process. Let $\sigma'$ be the sequence obtained from $\sigma$ by
removing all steps that belong to \textsc{Read} operations while
preserving the relative order of the remaining events.

We now prove that $\sigma'$ can occur from $D$. Let $n$ be the number of
steps in $\sigma$ that belong to \textsc{Click} operations. Then
$\sigma'$ can be written as $\sigma' = s_1, \ldots, s_n$. Let $P$ be the
set of processes that execute these steps, and for every
$1 \leq j \leq n$, let $\sigma_j'$ denote the prefix of $\sigma'$ up to
the $j$-th step, i.e., $\sigma_j' = s_1, \ldots, s_j$. Additionally, for
$1 \leq j \leq n$, let $D_j$ be the configuration obtained after
executing the $j$-th \textsc{Click} step of $\sigma$ from $D$. We prove the following subclaim by induction.

\paragraph*{Subclaim for Proof~\ref{proof:invoke_independence_to_read}}
For all $1 \leq j \leq n$, the sequence $\sigma_j'$ can occur from $D$.
Moreover, $D_j \isim{P} D\sigma_j'$, and the state of the shared memory in
these configurations is identical.

Observe that if the sub-claim holds for $j = n$, then $\sigma'$
can occur from $D$, $D_\sigma \isim{P} D_{\sigma'}$, and the shared
memory states in these configurations are identical.

\begin{proof}
For $j=1$, $\sigma_1'$ consists of a single step. Let $D'$ be the
configuration immediately preceding this step in the execution of $\sigma$ from $D$. Every prior step in $\sigma$ must belong to a
\textsc{Read} operation and therefore does not modify the shared
memory. Furthermore, no process in $P$ has taken any prior step in
$\sigma$, because no process executes steps from both
\textsc{Click} and \textsc{Read} operations in $\sigma$. Hence, the local state of
every process in $P$ is identical in $D$ and $D'$ (i.e.,
$D \isim{P} D'$), and the shared-memory states of $D$ and $D'$ coincide.
By Lemma~\ref{lemma:indistinguishable_configurations}, it follows that
$\sigma_1'$ can also occur from $D$, that
$D'\sigma_1' \isim{P} D\sigma_1'$, and that the shared-memory states
remain identical. Since $D'\sigma_1' = D_1$, the base case holds.

Assume the claim holds for some $j$, and consider $j+1$. By the
induction hypothesis, $D\sigma_j' \isim{P} D_j$, and the shared-memory
states of these configurations are identical. Let $D'$
be the configuration immediately preceding step $j+1$ in the execution of $\sigma$ from $D$. Every step between $D_j$ and $D'$ in the execution of $\sigma$ starting at $D$ 
belongs to a \textsc{Read} operation and thus does not modify the shared
memory. Moreover, no process in $P$ takes steps between these
configurations, since no process executes both \textsc{Click} and
\textsc{Read} steps in $\sigma$. Hence,
$D_j \isim{P} D'$, and the shared-memory states of $D_j$ and $D'$ are
identical. Together with the induction hypothesis and
Lemma~\ref{lemma:indistinguishable_configurations}, it follows that
$s_{j+1}$ can occur from $D\sigma_j'$, that
$D's_{j+1} \isim{P} D\sigma_j's_{j+1}$, and that the shared-memory
states remain identical. Since $D's_{j+1} = D_{j+1}$ and
$\sigma_j's_{j+1} = \sigma_{j+1}'$, the induction step holds.
\end{proof}

We now return to the main proof. Because $A$ satisfies update linearization independence from \textsc{Click} (Property~\ref{prop:oblvious_click_linearize} of
Definition~\ref{def:properties_oblivious_click}) and $\sigma'$ contains only \textsc{Click} steps, if $\sigma'$ is executed from $D$ and $p$ then performs $e_t$, that step is the effective linearization step of the \textsc{Invoke} operation. 

Let $q$ be an idle process that does not participate in either $E$ or $\tilde{E}$. We first rule out the possibility that the effective
linearization step occurs prior to $e_t$, namely, that some step in
$\sigma$ causes subsequent \textsc{Read} operations to observe the
updated state. Suppose this happens, and let $e \in \sigma$ be the first
such step when $\sigma$ is executed from $D$. Since \textsc{Read}
operations do not modify shared memory, $e$ must be part of a
\textsc{Click} operation and thus also appears in $\sigma'$. Hence,
$e = s_j$ for some $1 \leq j \leq n$. The configuration after $e$ in $\sigma$ is $D_i$, and the configuration after $e$ in $\sigma'$ is $D\sigma_j'$. By the subclaim, the shared-memory states of $D_j$ and $D\sigma_j'$ are identical. Since $q$ does not take steps in $\sigma$, we also have $D_j \isim{q} D\sigma_
j'$. By Lemma~\ref{lemma:indistinguishable_configurations}, if $q$ now performs $\textsc{Read}(i)$ from either configuration, it must return the updated value in both executions. The \textsc{Read}$(i)$ operation is performed after $\sigma'_j$ but before $e_t$. Since $\sigma'_j$ contains only \textsc{Click} steps, Property~\ref{prop:oblvious_click_linearize} of Definition~\ref{def:properties_oblivious_click} implies that the effective linearization step remains $e_t$. Hence, the \textsc{Read} should return the previous value, a contradiction.

Finally, we have that the shared-memory states of $D\sigma$ and
$D\sigma'$ are identical. Since neither $p$ nor $q$ take steps in
$\sigma$ or $\sigma'$, we also have
$D\sigma \isim{\{p,q\}} D\sigma'$. By
Lemma~\ref{lemma:indistinguishable_configurations}, if $p$ performs
$e_t$ and then $q$ executes $\textsc{Read}(i)$ from either configuration,
the returned values must coincide. Since executing $e_t$ from
$D\sigma'$ causes subsequent reads to observe the updated state, the
same must hold from $D\sigma$. Therefore, in the execution $\tilde{E}$,
$e_t$ is the first step after which \textsc{Read} operations
observe the updated value, implying that $e_t$ is the effective
linearization step of the operation, a contradiction. Hence, the lemma
holds.
\end{proof}

\subsection{Proof of Lemma~\ref{lemma:execution_valid}} \label{proof:execution_valid}

Lemma~\ref{lemma:execution_valid}. {\it Execution $E''$ in Subsection~\ref{subsec:first_construction} is a valid finite execution of algorithm~$\mathcal{A}$.}

\begin{proof}

    Let the execution start at an initial configuration $D_0$. For every $i\geq 1$, let $C_i$ be the configuration after Stage~\ref{step:oblivious_read} in the $i$-th iteration of execution $E''$, and let $D_i$ be the configuration after Stage~\ref{step:oblivious_next_event} in the $i$-th iteration. By obstruction-freedom (Property~\ref{prop:oblivious_obstruction_free} of Definition~\ref{def:properties_oblivious_click}),
    each execution of Stage~\ref{step:oblivious_read} terminates after a finite
    number of steps. Therefore, every iteration terminates after a finite number of steps. Since \textsc{Read} operations are invisible, executing
    Stage~\ref{step:oblivious_read} does not modify the shared memory.
    Additionally, process $p_1$ takes no steps in
    Stage~\ref{step:oblivious_read}. Hence, for every $i\ge 1$, $C_i \isim{p_1} D_{i-1}$, and the shared-memory state is identical in configurations $C_i$ and $D_{i-1}$.

    Consider an execution $E'''$ that starts from configuration $D_0$ and in
    which only $p_1$ takes steps. By obstruction-freedom (Property~\ref{prop:oblivious_obstruction_free} of Definition~\ref{def:properties_oblivious_click}), the \textsc{Invoke}
    operation executed by $p_1$ terminates after a finite number of steps. Let $F_i$ be
    the configuration obtained after $p_1$ performs its first $i$ steps from
    $D_0$ in $E'''$. We claim that for every $i\ge 1$, $F_i \isim{p_1} D_i$, and the shared-memory state is identical in $F_i$ and $D_i$.

    We prove the claim by induction on $i$. For the base case, $D_0\isim{p_1}C_1$, and the two configurations have identical shared-memory states. Lemma~\ref{lemma:indistinguishable_configurations} therefore implies that executing the first step of $p_1$ from these configurations yields $F_1\isim{p_1}D_1$ with identical shared-memory states.

    For the induction step, assume the claim holds for $i-1$, i.e.,
    $F_{i-1} \isim{p_1} D_{i-1}$ and the two configurations have identical shared-memory states. Since $D_{i-1} \isim{p_1} C_i$ and the shared-memory state is identical in $D_{i-1}$ and $C_i$, we obtain that $F_{i-1} \isim{p_1} C_i$, with identical shared-memory state as well. Lemma~\ref{lemma:indistinguishable_configurations} therefore implies that executing the next step of $p_1$ from these configurations yields $F_i\isim{p_1}D_i$ with identical shared-memory states.

    Since the \textsc{Invoke} operation terminates in $E'''$ and it modifies the state of $\mathcal{O}[1]$, it has an
    effective linearization step, which must be executed by $p_1$ (since $p_{read}$ does not modify shared memory). Let $F_j$ be the
    configuration immediately after this event. Because no other \textsc{Invoke}
    operations are applied to~$\mathcal{O}[1]$, every \textsc{Read}$(1)$ invoked
    after $F_j$ and executed in isolation returns the new state, namely $o_2$.
    Because the shared-memory states of $F_j$ and $D_j$ are identical, a \textsc{Read}$(1)$
    invoked after $D_j$ and executed by $p_{read}$ in isolation must also return $o_2$.

    Consequently, the loop terminates after exactly $j+1$ iterations. Suppose that the
    \textsc{Invoke} operation terminates at some earlier iteration. By
    linearizability, in the subsequent iteration $p_{read}$ must read the updated state.
    Hence, the earliest iteration in which the \textsc{Invoke} operation can
    terminate is the $j$-th iteration. It follows that the execution $E''$ is
    well-defined and finite.
\end{proof}

\ignore{

\subsection{Proof of Lemma~\ref{lemma:oblivious_conf_similar}} \label{proof:oblivious_conf_similar}

Lemma~\ref{lemma:oblivious_conf_similar}. {\it Let $C$ and $C'$ be the configurations defined in Subsection~\ref{subsec:first_construction}. Then
$C \isim{p_{click}} C'$, and the shared-memory state in both configurations
is identical.}
\begin{proof}
We begin our proof by discussing  configurations $Q$ and $Q_0$. As explained earlier,
the transition from $Q$ to $Q_0$ in execution $E$ consists of a single step taken by
$p_1$, in which it performs the effective linearization step of the
$\textsc{Invoke}(1,op,arg)$ operation. 

By Property~\ref{prop:oblivious_act_on_memory} of Definition~\ref{def:properties_oblivious_click}, this step does not modify
any shared-memory locations except, possibly, shared memory associated with $\mathcal{O}[1]$. Hence, for process $p_{click}$, which has not yet taken
part in the execution and has not yet accessed the shared memory, the local state is identical in $Q$ and $Q_0$
(i.e., $Q_0 \isim{p_{click}} Q$), and the shared memory differs only in
fields of $\mathcal{O}[1]$.

Then, configurations $Q_1$ and $Q_1'$ are obtained by
$p_{read}$ completing a $\textsc{Read}(1)$ operation, starting at configurations $Q_0$ and $Q$, respectively. Since reads are
invisible in $\mathcal{A}$, $p_{read}$ does not modify shared memory during
this step, and $p_{click}$ does not change its local state, since it does not performs a step. Therefore,
$Q_1 \isim{p_{click}} Q_1'$, and the shared-memory states may differ only in
fields of $\mathcal{O}[1]$.

Next, configurations $C$ and $Q_2'$ are obtained by having $p_{click}$
complete a \textsc{Click} operation, starting at configurations
$Q_1$ and $Q_1'$, respectively. By
Property~\ref{prop:oblivious_limit_click} of
Definition~\ref{def:properties_oblivious_click}, $p_{click}$ does not access
$\mathcal{O}[1]$ during the execution of this operation in $E$.
Therefore, the shared-memory locations accessed during the
\textsc{Click} execution are identical in $Q_1$ and $Q_1'$.

By Lemma~\ref{lemma:indistinguishable_configurations}, executing the
\textsc{Click} operation from either configuration results in identical
modifications to the shared memory and leaves the local state of
$p_{click}$ the same in the end of these two executions. Consequently,
$C \isim{p_{click}} Q_2'$, and the shared-memory states in the two
configurations may differ only in fields of $\mathcal{O}[1]$.

From $Q_2'$, configuration $Q_3'$ is obtained by having $p_{read}$ complete another $\textsc{Read}(1)$ operation. Since reads are invisible, shared memory is unchanged and $p_{click}$’s local state remains the same. Therefore, $C \isim{p_{click}} Q_3'$, and again the shared-memory states may differ only in fields of $\mathcal{O}[1]$.

Finally, configuration $C'$ is obtained from $Q_3'$ by a single step of $p_1$. Let $e$ denote the last step before configuration $Q_0$ in execution $E$. The prefixes of $E$ and $E'$ up to step $e$ are identical, and $e$ is the effective linearization step of the $\textsc{Invoke}(1,op,arg)$ operation in $E$. Observe that $Q_3'$ is obtained from $Q$, the configuration immediately preceding $e$ in $E$, by executing only steps that are part of \textsc{Click} and \textsc{Read} operations, such that no process executes steps from both \textsc{Click} and \textsc{Read} operations. Additionally, in $E$ there is one \textsc{Invoke} operation, which is updating. By Lemma~\ref{lemma:invoke_independence_to_read}, the next step of $p_1$ from $Q_3'$ is the effective linearization step of the
\textsc{Invoke}$(1,op,arg)$ operation in $E'$, and it modifies $\mathcal{O}[1]$ in exactly the same way as executing $e$ from $Q$. Since the difference in shared memory between configurations $Q_3'$ and $C$ is precisely the effect of step $e$ on $\mathcal{O}[1]$, the configuration obtained after performing step $e$ by $p_1$ from configuration $Q_3'$ (namely
$C'$) has shared memory identical to that of $C$. Moreover, $p_{click}$ does not take any steps between $Q_3'$ and $C'$, and since $Q_3' \isim{p_{click}} C$, it follows that $C' \isim{p_{click}} C$.
\end{proof}

}

\subsection{Proof of Lemma~\ref{lemma:invoke_independence_to_read_k_bound}} \label{proof:invoke_independence_to_read_k_bound}

Lemma~\ref{lemma:invoke_independence_to_read_k_bound}. {\it Let $\mathcal{A}$ be a linearizable adaptive snapshot with a $k$-bounded click algorithm that implements invisible \textsc{Read} operations, and $E$ be a \execType execution of $\mathcal{A}$ with history
$e_1, e_2, \ldots$ (finite or infinite). Let $p$ be a process that executes the effective linearization step of an updating operation \textsc{Invoke}$(i,op,arg)$. Let $e_t$ denote the step that is the effective linearization step of this operation in $E$. Consider a \execType execution $E'$, whose first $t-1$ events are identical to the first $t-1$ events of $E$, followed by an arbitrary sequence of steps, none of which are executed by $p$, consisting solely of \textsc{Click} and \textsc{Read} operations, or \textsc{Invoke}$(j,op_2,arg_2)$ operations with $j \neq i$ (either continuing previously invoked operations or newly invoked ones), such that no process executes steps of both \textsc{Read} and \textsc{Click}, or both \textsc{Read} and \textsc{Invoke}, within that sequence. After these steps, process $p$ performs step $e_t$, followed by any valid suffix of execution steps. Then, in execution $E'$, the effective linearization step of \textsc{Invoke}$(i,op,arg)$ is the event $e_t$ performed by $p$, and the result of the operation (both the state modification and the returned value) is identical to its result in execution $E$.
}

\begin{proof}
Let $E,i,op,arg,e_t$, and $p$ be as in the lemma.
Assume, for the sake of contradiction, that the lemma does not hold.
Then, there exists an execution $\tilde{E}$ whose first $t-1$ events are
identical to those of $E$, followed by a sequence of events $\sigma$
starting at configuration $D$, consisting solely of \textsc{Click} and \textsc{Read} operations, or \textsc{Invoke}$(j,op_2,arg_2)$ operations with $r \neq i$, such that no process executes steps from both \textsc{Read} and \textsc{Click}, or from both \textsc{Read} and \textsc{Invoke}, within that sequence, and after $\sigma$, process $p$ performs step $e_t$, yet this
step is \emph{not} the effective linearization step of
\textsc{Invoke}$(i,op,arg)$.

Since \textsc{Read} operations are invisible (i.e., they do not modify
the shared memory), they affect only the local state of the invoking
process. Let $\sigma'$ be the sequence obtained from $\sigma$ by
removing all steps that belong to \textsc{Read} operations while
preserving the relative order of the remaining events.

We now prove that $\sigma'$ can occur from $D$. Let $n$ be the number of
steps in $\sigma$ that belong to \textsc{Click} or \textsc{Invoke} operations. Then, $\sigma'$ can be written as $\sigma' = s_1, \ldots, s_n$. Let $P$ be the
set of processes that execute these steps, and for every
$1 \leq j \leq n$, let $\sigma_j'$ denote the prefix of $\sigma'$ up to the $j$-th step, i.e., $\sigma_j' = s_1, \ldots, s_j$. Additionally, for
$1 \leq j \leq n$, let $D_j$ be the configuration obtained after
executing the $j$-th \textsc{Click} or \textsc{Invoke} step of $\sigma$ from $D$. We prove the following subclaim by induction

\paragraph*{Subclaim for Proof~\ref{proof:invoke_independence_to_read_k_bound}}
For all $1 \leq j \leq n$, the sequence $\sigma_j'$ can occur from $D$.
Moreover, $D_j \isim{P} D\sigma_j'$, and the state of the shared memory in
these configurations is identical.

Observe that if the sub-claim holds for $j = n$, then $\sigma'$
can occur from $D$, $D_\sigma \isim{P} D_{\sigma'}$, and the shared
memory states in these configurations are identical.

\begin{proof}
For $j=1$, $\sigma_1'$ consists of a single step. Let $D'$ be the
configuration immediately preceding this step in the execution of $\sigma$ from $D$. Every prior step in $\sigma$ must belong to a
\textsc{Read} operation and therefore does not modify the shared
memory. Furthermore, no process in $P$ has taken any prior step in
$\sigma$, because no process executes steps from both \textsc{Read} and \textsc{Click}, or from both \textsc{Read} and \textsc{Invoke} in $\sigma$. Hence, the local state of
every process in $P$ is identical in $D$ and $D'$ (i.e.,
$D \isim{P} D'$), and the shared-memory states of $D$ and $D'$ coincide.
By Lemma~\ref{lemma:indistinguishable_configurations}, it follows that
$\sigma_1'$ can also occur from $D$, that
$D'\sigma_1' \isim{P} D\sigma_1'$, and that the shared-memory states
remain identical. Since $D'\sigma_1' = D_1$, the base case holds.

Assume the claim holds for some $j$, and consider $j+1$. By the
induction hypothesis, $D\sigma_j' \isim{P} D_j$, and the shared-memory
states of these configurations are identical. Let $D'$ be the configuration immediately preceding step $j+1$ in the execution of $\sigma$ from $D$. Every step between $D_j$ and $D'$ in the execution of $\sigma$ starting at $D$ 
belongs to a \textsc{Read} operation and thus does not modify the shared
memory. Moreover, no process in $P$ takes steps between these
configurations, since no process executes steps of both \textsc{Read} and \textsc{Click} operations or of both \textsc{Read} and \textsc{Invoke} operations in $\sigma$. Hence,
$D_j \isim{P} D'$, and the shared-memory states of $D_j$ and $D'$ are
identical. Together with the induction hypothesis and
Lemma~\ref{lemma:indistinguishable_configurations}, it follows that
$s_{j+1}$ can occur from $D\sigma_j'$, that
$D's_{j+1} \isim{P} D\sigma_j's_{j+1}$, and that the shared-memory
states remain identical. Since $D's_{j+1} = D_{j+1}$ and
$\sigma_j's_{j+1} = \sigma_{j+1}'$, the induction step holds.
\end{proof}

We now return to the main proof. Because $\mathcal{A}$ satisfies update linearization independence from \textsc{Click} and \textsc{Invoke}, and because $\sigma'$ contains only such steps and remains component-serialized, executing $\sigma'$ from $D$ and then having $p$ perform $e_t$ leaves $e_t$ as the effective linearization step of the \textsc{Invoke} operation.

Let $q$ be an idle process that does not participate in either $E$ or $\tilde{E}$. We first rule out the possibility that the effective linearization step occurs prior to $e_t$, namely, that some step in $\sigma$ causes subsequent \textsc{Read} operations to observe the updated state. Suppose this happens, and let $e \in \sigma$ be the first such step when $\sigma$ is executed from $D$. Since \textsc{Read} operations do not modify shared memory, $e$ must be part of a \textsc{Click} or \textsc{Invoke}$(r,op_2,arg_2)$ with $r \neq i$, and thus also appears in $\sigma'$. Hence, $e = s_j$ for some $1 \leq j \leq n$. The configuration after $e$ in $\sigma$ is $D_j$, and the configuration after $e$ in $\sigma'$ is $D\sigma_j'$. By the subclaim, the shared-memory states of $D_j$ and $D\sigma_j'$ are identical. Since $q$ does not take steps in $\sigma$, we also have $D_j \isim{q} D\sigma_j'$. By Lemma~\ref{lemma:indistinguishable_configurations}, if $q$ now performs $\textsc{Read}(i)$ from either configuration, it must return the updated value in both executions.

We observe that the \textsc{Read}$(i)$ operation is performed after executing $\sigma_j'$ but before executing $e_t$. Furthermore, $\sigma_j'$ consists of steps only from \textsc{Click}, or \textsc{Invoke}$(r,op_2,arg_2)$ with $r \neq i$, and the execution obtained when executing $\sigma_j'$ starting at $D$ is a \execType execution. Therefore, Property~\ref{prop:non_intrusive_linearize} of Definition~\ref{def:properties_non_intrusive} implies that in the obtained execution the effective linearization step remains $e_t$, and thus the \textsc{Read}$(i)$ operation should return the previous value, a contradiction.

Finally, we have that the shared-memory states of $D\sigma$ and
$D\sigma'$ are identical. Since neither $p$ nor $q$ take steps in
$\sigma$ or $\sigma'$, we also have
$D\sigma \isim{\{p,q\}} D\sigma'$. By
Lemma~\ref{lemma:indistinguishable_configurations}, if $p$ performs
$e_t$ and then $q$ executes $\textsc{Read}(i)$ from either configuration,
the returned values must coincide. Since executing $e_t$ from
$D\sigma'$ causes subsequent reads to observe the updated state, the
same must hold from $D\sigma$. Therefore, in the execution $\tilde{E}$,
$e_t$ is the first step after which \textsc{Read} operations
observe the updated value, implying that $e_t$ is the effective
linearization step of the operation, a contradiction. Hence, the lemma
holds.
\end{proof}

\ignore{

\subsection{Proof of Lemma~\ref{lemma:execution_valid_k_bounded}} \label{proof:execution_valid_k_bounded}

\begin{proof}

    We must prove that each iteration of the outer loop
    (Stage~\ref{step:outer_loop_k_bounded}) terminates. Fix an outer-loop iteration
    $i$ with $1 \leq i \leq k$. To show that iteration~$i$ terminates, it suffices
    to show that the inner loop (Stage~\ref{step:inner_loop_k_bounded}) terminates
    after finitely many iterations.
    
    By obstruction-freedom (Property~\ref{prop:obstruction_free} of Definition~\ref{def:properties_non_intrusive}), each execution of
    Stage~\ref{step:k_bounded_read} terminates after finitely many steps. Hence,
    every iteration of the inner loop completes after finitely many steps. Since
    \textsc{Read} operations are invisible, executing Stage~\ref{step:k_bounded_read}
    does not modify shared memory. Moreover, process $p_i$ takes no steps during
    Stage~\ref{step:k_bounded_read}. Therefore, for every $j \geq 1$,
    \[
        C_{i,j} \isim{p_i} D_{i,j-1},
    \]
    and the shared-memory state in configurations $C_{i,j}$ and $D_{i,j-1}$ is
    identical.

    Consider an execution $E'''$ that starts from configuration $D_{i,0}$ and in
    which only $p_i$ takes steps. By obstruction-freedom (Property~\ref{prop:obstruction_free} of Definition~\ref{def:properties_non_intrusive}), the \textsc{Invoke}
    operation of $p_i$ terminates after a finite number of steps. Let $F_{i,j}$ be
    the configuration obtained after $p_i$ performs its first $j$ steps from
    $D_{i,0}$ in $E'''$. We claim that for every $j\ge 1$,
    \[
        F_{i,j} \isim{p_i} D_{i,j},
    \]
    and, furthermore, the shared-memory state is identical in $F_{i,j}$ and $D_{i,j}$.

    We prove the claim by induction on $j$.
    For the base case, $D_{i,0} \isim{p_i} C_{i,1}$ and the shared memory is the same
    in these configurations. Therefore, if $p_i$ performs its first step in $E'''$ from
    $D_{i,0}$ (yielding $F_{i,1}$) and from $C_{i,1}$ (yielding $D_{i,1}$), by Lemma~\ref{lemma:indistinguishable_configurations}, the two resulting
    configurations are indistinguishable to $p_i$, and the shared memory is
    updated in the same way. Hence, $F_{i,1} \isim{p_i} D_{i,1}$, and their shared
    memory states coincide.

    For the induction step, assume the claim holds for $j-1$, i.e.,
    $F_{i,j-1} \isim{p_i} D_{i,j-1}$ and the shared-memory state is the same in
    both configurations. Since $D_{i,j-1} \isim{p_i} C_{i,j}$ and the shared memory is
    also identical in $D_{i,j-1}$ and $C_{i,j}$, we obtain that
    $F_{i,j-1} \isim{p_i} C_{i,j}$, with identical shared memory as well. Therefore,
    when $p_i$ performs its next step from $F_{i,j-1}$ (yielding $F_{i,j}$) and from
    $C_{i,j}$ (yielding $D_{i,j}$), by Lemma~\ref{lemma:indistinguishable_configurations}, the resulting configurations remain
    indistinguishable to $p_i$, and the shared memory is modified identically.
    Thus, $F_{i,j} \isim{p_i} D_{i,j}$, and their shared-memory states coincide.

    Since the \textsc{Invoke} operation terminates in $E'''$, it has an
    effective linearization step executed by $p_i$. Let $F_{i,j}$ be the
    configuration immediately after this event. Because no other \textsc{Invoke}
    operations are applied to~$\mathcal{O}[i]$, every \textsc{Read}$(i)$ invoked
    after $F_{i,j}$ and executed in isolation returns the new state, namely $o_2$.
    As the shared-memory state in $F_{i,j}$ and $D_{i,j}$ is identical, a \textsc{Read}$(i)$
    invoked after $D_{i,j}$ and executed by $p_{read}$ in isolation must also return $o_2$.

    Consequently, in outer-loop iteration $i$, the inner loop terminates after
    exactly $j+1$ iterations. It follows that every iteration of the outer loop
    terminates, and therefore the outer loop as a whole completes after finitely
    many iterations.
    
    Next, in Stage~\ref{step:k_bounded_rest}, for every $1 \leq i \leq k$, if
    \textsc{Invoke}$(i,op,arg)$ has not yet terminated, then process $p_i$
    completes its execution. By obstruction-freedom, this completion occurs after
    finitely many steps. Hence, each iteration of the loop in
    Stage~\ref{step:finish_loop} terminates after finitely many steps.
    
    We conclude that the execution $E''$ is well-defined and finite.
    
\end{proof}

}

\subsection{Proof of Lemma~\ref{lemma:k_bounded_conf_similar}} \label{proof:k_bounded_conf_similar}

Lemma~\ref{lemma:k_bounded_conf_similar}. {\it Let $C$ and $C'$ be the configurations defined in Subsection~\ref{subsec:second_construction}. Then
$C \isim{p_{click}} C'$, and the shared-memory state in both configurations
is identical.}

\begin{proof}
We begin by comparing configurations $Q_{j-1}$ and $Q_j$.
Executions $E$ and $E'$ are identical up to configuration $Q_{j-1}$.
In execution $E$, the transition from $Q_{j-1}$ to $Q_j$
consists of a single step taken by $p_j$.

Recall that during the construction of $E''$ there is at most one concurrent \textsc{Invoke} operation per component. Additionally, the $j$-th stage of the construction ends in configuration $C_j$ such that the next step of $p_j$
is the effective linearization step of
\textsc{Invoke}$(j,op,arg)$.
In execution $E$, from configuration $C_j$ until $Q_{j-1}$,
every step taken is either
(i) a step of a \textsc{Read} operation, or
(ii) a step of an \textsc{Invoke}$(i,op_2,arg_2)$ operation for
$i \neq j$. Moreover, only $p_{read}$ performs \textsc{Read} steps, and it performs no \textsc{Invoke} or \textsc{Click} steps. The execution $E$ is component-serialized. Hence, the conditions of Lemma~\ref{lemma:invoke_independence_to_read_k_bound} hold. It follows that the next step of $p_j$ from $Q_{j-1}$
is the effective linearization step of
\textsc{Invoke}$(j,op,arg)$.

By Property~\ref{prop:act_on_memory} of
Definition~\ref{def:properties_non_intrusive},
this step modifies no shared-memory locations
except possibly shared memory associated with $\mathcal{O}[j]$.
Since $p_{click}$ has not taken any steps so far,
its local state is identical in $Q_{j-1}$ and $Q_j$,
and the shared memory differs only at shared-memory locations associated with $\mathcal{O}[j]$.

\medskip
Next, in both executions, processes
$p_{j+1},\ldots,p_{k+1}$ each take one step.
By the same argument as above, each such step
is the effective linearization step of
\textsc{Invoke}$(i,op,arg)$ for the corresponding $i$.
By Property~\ref{prop:act_on_memory} of Definition~\ref{def:properties_non_intrusive},
none of these steps access $\mathcal{O}[j]$.
Thus, they do not access the only shared-memory locations
at which $Q_{j-1}$ and $Q_j$ may differ.
Applying Lemma~\ref{lemma:indistinguishable_configurations},
the resulting configurations $Q_{k+1}$ and $Q_{k+1}'$
differ only in shared-memory locations associated with~$\mathcal{O}[j]$, and since $p_{click}$ has still taken no steps,
we have $Q_{k+1} \isim{p_{click}} Q_{k+1}'$.

Configurations $D_1$ and $D_1'$ are obtained by having
$p_{read}$ complete \textsc{Read}$(i)$
for every $1 \le i \le k+1$,
starting at $Q_{k+1}$ and $Q_{k+1}'$, respectively.
Since reads are invisible in $\mathcal{A}$,
they do not modify shared memory,
and $p_{click}$ still takes no steps.
Hence, $D_1 \isim{p_{click}} D_1'$, and the shared memory may differ only in shared-memory locations associated with~$\mathcal{O}[j]$.

Next, configurations $C$ and $D_2'$ are obtained by
having $p_{click}$ complete a \textsc{Click}
operation starting at $D_1$ and $D_1'$, respectively.
By the choice of $j$ and Property~\ref{prop:limit_memory} of Definition~\ref{def:properties_non_intrusive},
this \textsc{Click} operation in $E$ does not access
$\mathcal{O}[j]$.
Since $D_1$ and $D'_1$ differ only in shared-memory locations associated with~$\mathcal{O}[j]$, the sets of locations accessed by the \textsc{Click} operation in $E$ and $E'$
are identical.
By Lemma~\ref{lemma:indistinguishable_configurations},
executing the \textsc{Click} operation from $D_1$ and $D_1'$
produces identical modifications to shared memory
and leaves $p_{click}$ in the same local state.
Therefore, $C \isim{p_{click}} D_2'$, and the shared memory differ only in shared-memory locations associated with~$\mathcal{O}[j]$.

From $D_2'$, configuration $D_3'$ is obtained by
$p_{read}$ completing another sequence of
\textsc{Read}$(i)$ operations.
Again, since reads are invisible,
shared memory remains unchanged and
$p_{click}$’s local state is unaffected.
Thus, $C \isim{p_{click}} D_3'$, and the shared memory differ only in shared-memory locations associated with~$\mathcal{O}[j]$

Finally, configuration $C'$ is obtained from $D_3'$
by a single step of $p_j$.
Let $e$ denote the step that moves execution $E$
from $Q_{j-1}$ to $Q_j$. The prefix of $E'$ from $Q_{j-1}$ to $D'_3$ contains only \textsc{Read}, \textsc{Click}, and \textsc{Invoke}$(i,op_2,arg_2)$ steps with $i\neq j$. Only $p_{read}$ performs \textsc{Read} steps, and it performs no \textsc{Click} or \textsc{Invoke} steps. Both the prefix of $E$ up to $Q_{j-1}$ and execution $E'$ are component-serialized. Thus, the conditions of Lemma~\ref{lemma:invoke_independence_to_read_k_bound}
hold once more. It follows that the next step of $p_j$
from $D_3'$ is the effective linearization step of \textsc{Invoke}$(j,op,arg)$, and it modifies $\mathcal{O}[j]$
exactly as step $e$ does in $E$.

The only difference between the shared-memory states of $C$ and $D'_3$ is the effect of step $e$ on the shared-memory locations associated with~$\mathcal{O}[j]$. Executing this step from $D'_3$ therefore yields a configuration $C'$ with the same shared-memory state as $C$. Moreover, $p_{click}$ takes no steps between $D_3'$
and $C'$. Because $D_3' \isim{p_{click}} C$,
we conclude that $C' \isim{p_{click}} C$, and the shared-memory states are identical.
\end{proof}

\end{document}